\documentclass[aps,pra,nopacs,nokeys,superscriptaddress,11pt,twoside,notitlepage,a4paper]{revtex4-1}

\usepackage{graphicx,epic,eepic,epsfig,amsmath,latexsym,amssymb,verbatim,color}
\usepackage{xpatch}

\usepackage{theorem}
\newtheorem{definition}{Definition}
\newtheorem{proposition}[definition]{Proposition}
\newtheorem{lemma}[definition]{Lemma}

\newtheorem{theorem}[definition]{Theorem}

\def\squareforqed{\hbox{\rlap{$\sqcap$}$\sqcup$}}
\def\qed{\ifmmode\squareforqed\else{\unskip\nobreak\hfil
\penalty50\hskip1em\null\nobreak\hfil\squareforqed
\parfillskip=0pt\finalhyphendemerits=0\endgraf}\fi}
\def\endenv{\ifmmode\;\else{\unskip\nobreak\hfil
\penalty50\hskip1em\null\nobreak\hfil\;
\parfillskip=0pt\finalhyphendemerits=0\endgraf}\fi}
\newenvironment{proof}{\noindent \textbf{{Proof~} }}{\qed}
\newenvironment{remark}{\noindent \textbf{{Remark~}}}{\qed}

\mathchardef\ordinarycolon\mathcode`\:
\mathcode`\:=\string"8000
\def\vcentcolon{\mathrel{\mathop\ordinarycolon}}
\begingroup \catcode`\:=\active
  \lowercase{\endgroup
  \let :\vcentcolon
  }

\newcommand{\nc}{\newcommand}
\nc{\rnc}{\renewcommand}
\nc{\beq}{\begin{equation}}
\nc{\eeq}{{\end{equation}}}
\nc{\beqa}{\begin{eqnarray}}
\nc{\eeqa}{\end{eqnarray}}
\nc{\lbar}[1]{\overline{#1}}
\nc{\bra}[1]{\langle#1|}
\nc{\ket}[1]{|#1\rangle}
\nc{\ketbra}[2]{|#1\rangle\!\langle#2|}
\nc{\braket}[2]{\langle#1|#2\rangle}
\nc{\proj}[1]{| #1\rangle\!\langle #1 |}
\nc{\avg}[1]{\langle#1\rangle}
\nc{\Rank}{\operatorname{Rank}}
\nc{\smfrac}[2]{\mbox{$\frac{#1}{#2}$}}
\nc{\tr}{{\operatorname{Tr}\,}}
\nc{\ox}{\otimes}
\nc{\dg}{\dagger}
\nc{\dn}{\downarrow}
\nc{\cA}{{\cal A}}
\nc{\cB}{{\cal B}}
\nc{\cC}{{\cal C}}
\nc{\cD}{{\cal D}}
\nc{\cE}{{\cal E}}
\nc{\cF}{{\cal F}}
\nc{\cG}{{\cal G}}
\nc{\cH}{{\cal H}}
\nc{\cI}{{\cal I}}
\nc{\cJ}{{\cal J}}
\nc{\cK}{{\cal K}}
\nc{\cL}{{\cal L}}
\nc{\cM}{{\cal M}}
\nc{\cN}{{\cal N}}
\nc{\cO}{{\cal O}}
\nc{\cP}{{\cal P}}
\nc{\cQ}{{\cal Q}}
\nc{\cR}{{\cal R}}
\nc{\cS}{{\cal S}}
\nc{\cT}{{\cal T}}
\nc{\cU}{{\cal U}}
\nc{\cX}{{\cal X}}
\nc{\cY}{{\cal Y}}
\nc{\cZ}{{\cal Z}}
\nc{\conv}{{\operatorname{conv}\,}}
\nc{\csupp}{{\operatorname{csupp}}}
\nc{\qsupp}{{\operatorname{qsupp}}}
\nc{\var}{{\operatorname{var}}}
\nc{\rar}{\rightarrow}
\nc{\lrar}{\longrightarrow}
\nc{\polylog}{{\operatorname{polylog}}}
\nc{\1}{{\openone}}
\nc{\wt}{{\operatorname{wt}}}
\nc{\av}[1]{{\left\langle {#1} \right\rangle}}

\nc{\RR}{{{\mathbb R}}}
\nc{\CC}{{{\mathbb C}}}
\nc{\FF}{{{\mathbb F}}}
\nc{\NN}{{{\mathbb N}}}
\nc{\ZZ}{{{\mathbb Z}}}
\nc{\PP}{{{\mathbb P}}}
\nc{\QQ}{{{\mathbb Q}}}
\nc{\UU}{{{\mathbb U}}}
\nc{\EE}{{{\mathbb E}}}
\nc{\id}{{\operatorname{id}}}

\nc{\CHSH}{{\operatorname{CHSH}}}

\nc{\be}{\begin{equation}}
\nc{\ee}{{\end{equation}}}
\nc{\bea}{\begin{eqnarray}}
\nc{\eea}{\end{eqnarray}}
\nc{\<}{\langle}
\rnc{\>}{\rangle}
\nc{\Hom}[2]{\mbox{Hom}(\CC^{#1},\CC^{#2})}
\nc{\rU}{\mbox{U}}

\nc{\ob}[1]{#1}

\nc{\SEP}{{\text{SEP}}}
\nc{\NS}{{\text{NS}}}
\nc{\LOCC}{{\text{LOCC}}}
\nc{\PPT}{{\text{PPT}}}
\nc{\EXT}{{\text{EXT}}}
\nc{\Sym}{{\operatorname{Sym}}}

\nc{\ERLO}{{E_{\text{r,LO}}}}
\nc{\ERLOCC}{{E_{\text{r,LOCC}}}}
\nc{\ERPPT}{{E_{\text{r,PPT}}}}
\nc{\ERLOCCinfty}{{E^{\infty}_{\text{r,LOCC}}}}
\nc{\Aram}{{\operatorname{\sf A}}}

\begin{document}

\title{Energy-constrained diamond norm with applications to the\protect\\ 
       uniform continuity of continuous variable channel capacities}

\date{29 December 2017}

\author{Andreas Winter}
\affiliation{ICREA---Instituci\'o Catalana de Recerca i Estudis Avan\c{c}ats, Pg.~Lluis Companys 23, ES-08010 Barcelona, Spain}
\affiliation{Departament de F\'{\i}sica: Grup d'Informaci\'{o} Qu\`{a}ntica, Universitat Aut\`{o}noma de Barcelona, ES-08193 Bellaterra (Barcelona), Spain.} 
\email{andreas.winter@uab.cat}

\begin{abstract}
The channels, and more generally superoperators acting on the trace class
operators of a quantum system naturally form a Banach space under the
completely bounded trace norm (aka \emph{diamond norm}). 
However, it is well-known that in infinite dimension, the norm topology is 
often ``too strong'' for reasonable applications.
Here, we explore a recently introduced energy-constrained diamond norm
on superoperators (subject to an energy bound on the input states).
Our main motivation is the continuity of capacities and other
entropic quantities of quantum channels, but we also present 
an application to the continuity of one-parameter unitary groups
and certain one-parameter semigroups of quantum channels.
\end{abstract}

\maketitle

\section{Diamond norm}
\label{sec:diamond}
Let $A$ and $B$ be separable Hilbert spaces, in most of the present note
infinite dimensional, of two quantum systems whose states are described by 
the trace class operators $\cT(A)$ and $\cT(B)$, respectively:
\[
  \cT(A) = \{ \xi:A\rightarrow A \text{ s.t. } \|\xi\|_1 < \infty \},
\]
where $\|\xi\|_1 = \tr\sqrt{\xi^\dagger\xi}$ is the trace norm, i.e.~the
sum of all the singular values of $\xi$.
Quantum channels, or in physics language, open system state evolutions,
are modelled as completely positive and trace preserving (cptp) maps
$\cN:\cT(A)\rightarrow\cT(B)$. As these, and more generally their real
linear combinations, which are Hermitian-preserving superoperators 
$\Delta:\cT(A) \longrightarrow \cT(B)$, are linear maps between Banach
spaces, they inherit a natural norm, often called the \emph{trace norm}:
\[\begin{split}
  \|\Delta\|_{1\rightarrow 1} :&= \sup \| \Delta\xi \|_1 \text{ s.t. } \|\xi\|_1 \leq 1  \\
                               &= \sup \| \Delta\rho \|_1 \text{ s.t. } \rho \text{ state on } A.
\end{split}\]
As is well-known, this norm induces a topology on maps that is often too strong 
for quantum mechanical applications -- we will discuss such instances below.
However, as a norm it is actually too weak, since it is not stable under
tensor products with the identity map. Indeed, the natural norm on
superoperators is the of the \emph{completely bounded trace norm}, also known as 
\emph{diamond norm}~\cite{diamond,CB}:
\begin{equation}
  \|\Delta\|_\diamond := \sup \| (\Delta\ox\id_{C})\rho \|_1 \text{ s.t. } \rho \text{ state on } A\ox C.
\end{equation}
From the definition, it is easy to see that the supremum may be restricted 
to pure states (namely by purification and the contractivity of the trace 
norm under partial traces), and that w.l.o.g.~$C = A' \simeq A$. In particular, 
in finite dimension, the supremum is always attained on a pure state on $A\ox A'$.
As a matter of terminology, in the present article, superoperators are generally 
assumed to be Hermitian-preserving, which means that they are
differences of completely positive (cp) maps; 
and \emph{channels} are those maps that are cptp.

The diamond norm has an operational interpretation 
for $\Delta = p\cN_1 - (1-p)\cN_2$, in a Helstrom
context of binary hypothesis testing between to channels $\cN_i$, as follows:
$\frac12 \bigl( 1-\| p\cN_1 - (1-p)\cN_2 \|_{\diamond} \bigr)$ equals the minimum
error probability of distinguishing $\cN_1$ from $\cN_2$, which come with prior
probabilities $p$ and $1-p$, respectively,
when we are allowed preparation of a probe state $\rho^{AC}$, one application of 
the unknown channel, and an arbitrary measurement on the system $BC$.

\medskip
The diamond norm not only gives an important and natural metric on quantum
channels \cite{diamond,CB,Watrous:diamond}, it turns out that it provides also
the natural setting to discuss continuity of channel capacities in finite 
dimension. This proceeds via the Alicki-Fannes inequality \cite{AlickiFannes}
and the ``telescoping'' trick of Leung and Smith \cite{LeungSmith:cont}.
Here, we derive inequalities motivated by a desire to generalise the latter
to infinite dimensional (for instance Bosonic) channels. The entropy is
known to enjoy Fannes-type continuity subject to energy bounds \cite{Winter-S}, 
and so we have to look at the correct metric on channels to make the
Leung-Smith argument work.

The rest of the paper is organised as follows: In section~\ref{sec:issues}
we review some of the issues with the diamond norm in infinite dimensional
systems; then in section~\ref{sec:e-bounded-diamond} we present the 
energy-constrained diamond norm and review some of its immediate mathematical 
properties, as well as presenting some less obvious ones.
After that we move to two groups of applications: 
First (section~\ref{sec:continuous-time}), we show that the energy constraint
allows us to recover the norm continuity of unitary time evolution 
w.r.t.~the generating Hamiltonian, and certain one-parameter semigroups;
we highlight a connection of the explicit norm continuity bounds to the
topic of quantum speed limits.
Secondly, in section~\ref{sec:continuity-S}, we apply the formalism to prove
a uniform continuity lemma for the conditional entropy of quantum channel
outputs under an energy constraint at the input, and use it to prove uniform 
continuity of channel capacities with energy constraints at the sender.

\section{Undesirable features of the diamond norm in infinite dimension}
\label{sec:issues}
It is well-known that the superoperator trace norm or diamond norm
convergence is not suitable to make one-parameter semigroups
$t \longmapsto \Lambda(t) = e^{t\cL}$ continuous, unless the generator
$\cL$ is bounded~\cite{Lindblad,SHW:lindblad}. In fact, from the theory
of one-parameter unitary groups, i.e.~free quantum mechanical time
evolution, we know that the natural, correct notion of continuity in
time is provided by the strong topology of pointwise convergence.
On the other hand, strong convergence is not defined in terms of
a single norm, but in certain applications we need a quantitative
measure of how far an element of a sequence is from its limit,
in other words: a metric.
To illustrate the issues in some concrete examples, start by
considering the single-mode Bosonic quantum limited attenuators $\cA_\eta$, 
$|\eta|\leq 1$, defined uniquely by the property that for all coherent 
states $\ket{\alpha}$,
\[
  \cA_\eta(\proj{\alpha}) = \proj{\eta\alpha},
\]
which can be realised using a beam splitter with transmissivity $\eta$,
whose one input channel is fed with the input state and the other
with the vacuum state $\ket{0}$.

\begin{proposition}
  \label{prop:attenuators}
  For any $\eta \neq \eta'$, we have 
  $\| \cA_\eta - \cA_{\eta'} \|_\diamond = 2$. 
\end{proposition}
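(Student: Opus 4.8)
The plan is to separate the trivial upper bound from the substantive lower bound. Since both $\cA_\eta$ and $\cA_{\eta'}$ are channels (cptp maps), each has diamond norm $1$, so by the triangle inequality $\|\cA_\eta - \cA_{\eta'}\|_\diamond \leq \|\cA_\eta\|_\diamond + \|\cA_{\eta'}\|_\diamond = 2$. Thus everything reduces to exhibiting a family of inputs on which the trace norm of the output difference approaches $2$. The observation that makes this easy is that no entangled ancilla is needed: on a coherent-state input the two channels produce \emph{pure} output states, whose trace distance is controlled directly by their overlap.

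Concretely, I would feed the ancilla-free input $\proj{\alpha}$ into the definition of the diamond norm, obtaining the lower bound
\[
  \|\cA_\eta - \cA_{\eta'}\|_\diamond \geq \big\| \proj{\eta\alpha} - \proj{\eta'\alpha} \big\|_1 .
\]
For two pure states the trace-norm distance equals $2\sqrt{1 - |\braket{\eta\alpha}{\eta'\alpha}|^2}$, and the standard coherent-state overlap gives $|\braket{\eta\alpha}{\eta'\alpha}|^2 = e^{-|\eta-\eta'|^2 |\alpha|^2}$. Hence the right-hand side is $2\sqrt{1 - e^{-|\eta-\eta'|^2|\alpha|^2}}$. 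Since $\eta \neq \eta'$ forces $|\eta - \eta'| > 0$, letting $|\alpha| \to \infty$ drives the overlap to $0$ and the bound to $2$. Combined with the upper bound, this yields $\|\cA_\eta - \cA_{\eta'}\|_\diamond = 2$.

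There is no serious obstacle here; the only points requiring care are invoking the two standard identities — the pure-state trace-distance formula and the Gaussian overlap of coherent states — and noting that the attenuator's defining action $\cA_\eta(\proj{\alpha}) = \proj{\eta\alpha}$ is precisely what keeps the outputs pure, so that no ancilla is required. The conceptual payoff, which motivates the whole discussion of energy constraints, is that this exhibits the diamond norm as pathologically discontinuous in $\eta$: arbitrarily close attenuators are perfectly distinguishable, but only by probe states of unbounded energy $|\alpha|^2 \to \infty$.
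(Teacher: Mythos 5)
Your proof is correct and follows essentially the same route as the paper's: ancilla-free coherent-state probes, the pure-state trace-distance formula, and the Gaussian overlap $|\braket{\eta\alpha}{\eta'\alpha}|^2 = e^{-|(\eta-\eta')\alpha|^2}$ driven to zero as $|\alpha|\to\infty$. The only cosmetic difference is that you justify the upper bound via the triangle inequality where the paper simply notes that any two channels are at diamond distance at most $2$.
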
  
I.e., all attenuators are at maximum distance from each other, 
despite the $(\cA_{e^{-t}})_{t\in\RR}$ forming a seemingly 
``continuous'' one-parameter semigroup; which it is, but not with respect
to the diamond norm topology, but the so-called strong topology 

\medskip
\begin{proof}
Clearly, $\| \cN_1 - \cN_2 \|_\diamond \leq 2$ for any two channels.
On the other hand, for $\eta \neq \eta'$,
\[\begin{split}
  \| \cA_\eta - \cA_{\eta'} \|_\diamond 
           &\geq \sup_\alpha \| \cA_\eta(\proj{\alpha}) - \cA_{\eta'}(\proj{\alpha}) \|_1 \\
           &=    \sup_\alpha \| \proj{\eta\alpha} - \proj{\eta'\alpha} \|_1   \\
           &=    \sup_\alpha 2\sqrt{1-|\bra{\eta\alpha}\eta'\alpha\rangle|^2} \\
           &=    \sup_\alpha 2\sqrt{1-\exp\left(-|(\eta-\eta')\alpha|^2\right)}
            =    2,
\end{split}\]
where in the first two lines we have simply inserted coherent states $\ket{\alpha}$
as test stated and evaluated the channels on those; in the third line we have
used the well-known relation between the trace distance and inner product of
pure states, and in the fourth the equally well-known formula for the
inner product of coherent states.
\end{proof}

\medskip
What goes on here is that to realise this large distance between
different attenuator channels, we need to probe them with highly
energetic test states, albeit only coherent ones in the example. 
This is at odds with most communication settings where continuous
variable channels are used under an energy constraint on the input.
The same is true for one-mode squeezing unitaries, or displacement unitaries; 
the corresponding channels are always at mutual diamond norm distance $2$.
A yet more fundamental example are time evolutions generated by unbounded
Hamiltonians:

\begin{proposition}
  \label{prop:time-evolution}
  For an unbounded Hamiltonian $H\geq 0$, let $U_t = e^{-itH}$ and
  $\cU_t(\rho) = U_t \rho U_t^\dagger$ be the time-propagator. 
  Then there is a dense set $\mathcal{D}\subset\RR$ 
  such that $\| \cU_t - \cU_{t'} \|_\diamond = 2$ for all times
  $t,t'$ with $t-t'\in\mathcal{D}$.
\end{proposition}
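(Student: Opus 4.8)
The plan is to reduce the two-time statement to a one-parameter question and then feed in the unboundedness of $H$ through the spectral theorem. First I would use that $\cU_{t'}$ is a unitary channel, so that it acts isometrically under the diamond norm: writing $\cU_t - \cU_{t'} = (\cU_{t-t'}-\id)\circ\cU_{t'}$ and invoking $\|\Phi\circ\cU_{t'}\|_\diamond = \|\Phi\|_\diamond$, we get
\[
  \|\cU_t-\cU_{t'}\|_\diamond = \|\cU_s-\id\|_\diamond, \qquad s:=t-t'.
\]
It therefore suffices to produce a dense set $\mathcal{D}\subset\RR$ of values $s$ with $\|\cU_s-\id\|_\diamond = 2$.

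Next I would bound the norm from below exactly as in Proposition~\ref{prop:attenuators}, feeding in a single pure test state $\proj\psi$ on $A$ (no reference system is needed): since $\cU_s(\proj\psi)=\proj{U_s\psi}$ is again pure,
\[
  \|\cU_s-\id\|_\diamond \geq \sup_{\ket\psi} \|\,\proj{U_s\psi}-\proj\psi\,\|_1 = \sup_{\ket\psi} 2\sqrt{1-|\bra\psi U_s\ket\psi|^2}.
\]
As the norm is always $\leq 2$, it equals $2$ as soon as $\inf_{\ket\psi}|\bra\psi U_s\ket\psi| = 0$. Now $\bra\psi U_s\ket\psi$ ranges over the numerical range of the normal operator $U_s=e^{-isH}$, whose closure is $\overline{\operatorname{conv}}\,\sigma(U_s)$, and by the spectral mapping theorem $\sigma(U_s)=\overline{\{e^{-is\lambda}:\lambda\in\sigma(H)\}}$ sits on the unit circle. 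Hence the task becomes: find a dense set of $s$ for which $0\in\overline{\operatorname{conv}}\,\sigma(e^{-isH})$.

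To achieve this I would use just two spectral points. Fix $\lambda_0\in\sigma(H)$ and, using that $H\geq 0$ is unbounded so that $\sigma(H)$ is unbounded, choose $\mu_n\in\sigma(H)$ with $\mu_n\to\infty$. Whenever $s(\mu_n-\lambda_0)\in\pi+2\pi\ZZ$ the two unit-circle points $e^{-is\lambda_0}$ and $e^{-is\mu_n}$ are antipodal, so their midpoint $0$ lies in $\conv\,\sigma(e^{-isH})$. The solution set $S_n=\{(2k+1)\pi/(\mu_n-\lambda_0):k\in\ZZ\}$ has gap $2\pi/(\mu_n-\lambda_0)\to 0$, so $\mathcal{D}:=\bigcup_n S_n$ is dense in $\RR$, which is exactly what we need.

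The one delicate point---and the main obstacle---is that $H$ may have purely continuous spectrum, so $\lambda_0,\mu_n$ need not be eigenvalues and the extremal $\ket\psi$ need not exist; the supremum above reaches $2$ without being attained. I would resolve this with approximate eigenvectors drawn from spectral projections: for shrinking disjoint intervals $I_k\downarrow\{\lambda_0\}$ and $J_k\downarrow\{\mu_n\}$ the projections $\mathbf 1_{I_k}(H),\mathbf 1_{J_k}(H)$ are nonzero (as $\lambda_0,\mu_n\in\sigma(H)$), and any unit vectors $\ket{a_k},\ket{b_k}$ in their ranges are orthogonal with $\bra{a_k}U_s\ket{a_k}\to e^{-is\lambda_0}$, $\bra{b_k}U_s\ket{b_k}\to e^{-is\mu_n}$, while the cross term $\bra{a_k}U_s\ket{b_k}$ vanishes because $U_s$ commutes with the (orthogonal) spectral projections. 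Then $\ket{\psi_k}=\tfrac1{\sqrt2}(\ket{a_k}+\ket{b_k})$ gives $\bra{\psi_k}U_s\ket{\psi_k}\to\tfrac12(e^{-is\lambda_0}+e^{-is\mu_n})=0$ for $s\in\mathcal{D}$, completing the lower bound $\|\cU_s-\id\|_\diamond=2$.
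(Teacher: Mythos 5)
Your proposal is correct, and its skeleton matches the paper's: reduce to $t'=0$ by the group property and unitary invariance of the diamond norm, then show that $\|\cU_s-\id\|_\diamond=2$ exactly when $0$ lies in the closed convex hull of $\operatorname{spec}(e^{-isH})$, and finally use unboundedness of $H$ to make such times $s$ dense. Where you genuinely differ is in the two supporting arguments, and in both cases your route is arguably cleaner. For the density, the paper takes $\cD$ to be the full set $\{t: 0\in\overline{\conv}\,\operatorname{spec}(U_t)\}$ and argues by contradiction: a separating line for a hypothetical $t_0\notin\cD$ persists for the low-energy part of the spectrum on a small time window, while one sufficiently high spectral value $E_1$ makes a full revolution on that window, so the intermediate value theorem forces a nearby time in $\cD$. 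You instead exhibit an explicit subset: arithmetic progressions $S_n=\{(2k+1)\pi/(\mu_n-\lambda_0):k\in\ZZ\}$ at which two chosen spectral points become antipodal, with gaps $2\pi/(\mu_n-\lambda_0)\rightarrow 0$; this is constructive, avoids the separation/IVT machinery, and even tells you where the bad times are. For the lower bound on the norm, the paper works with eigenstate superpositions $\ket{\psi}=\sum_E c_E\ket{E}$, which implicitly presumes point spectrum; your use of the numerical range of the normal operator $U_s$ (or, equivalently, your explicit approximate eigenvectors built from spectral projections of shrinking intervals $I_k$, $J_k$) handles purely continuous spectrum rigorously, a case the paper's write-up glosses over. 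Note that your final paragraph actually supersedes the numerical-range step: the vectors $\ket{\psi_k}=\tfrac{1}{\sqrt{2}}(\ket{a_k}+\ket{b_k})$ directly give $\bra{\psi_k}U_s\ket{\psi_k}\rightarrow 0$, so the supremum defining the lower bound reaches $2$ without any attainment issue; the appeal to $\overline{W(U_s)}=\overline{\conv}\,\sigma(U_s)$ could be dropped entirely, leaving a fully self-contained and slightly more general proof than the one in the paper.
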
  
\begin{proof}
By the one-parameter group property of $\cU(t)$, and the unitary
invariance of the diamond norm, we only have to prove the statement
for $t'=0$. 
We may w.l.o.g.~assume that the smallest eigenvalue of $H$ is $0$.
We shall give a simple description of a possible set $\cD$ in 
terms of the Hamiltonian's spectrum $S=\operatorname{spec}\,H$: 
\[
  \cD = \left\{ t : 0 \in \overline{\conv(\operatorname{spec}\,U_t)} 
                           = \overline{\conv\{e^{-itE}:E\in S\}} \right\}.
\]
Indeed, for a unit vector in eigenstate superposition 
$\ket{\psi} = \sum_E c_E \ket{E}$, we have 
\[
  \bra{\psi} e^{-itH} \ket{\psi} = \sum_E |c_E|^2 e^{-itE},
\]
which can be made $0$ by choosing the probabilities $|c_E|^2$
if and only if the convex hull of the $e^{-itE}$ contains the origin. 
It can be made arbitrarily small if the closure of the convex hull
contains the origin. In other words, $\ket{\psi}$ and $e^{-itH} \ket{\psi}$
can be made arbitrarily close to orthogonal.

To show that $\cD$ is dense in $\RR$, consider a point $t_0\not\in\cD$.
This means that there is a line separating the convex set
$\cC_0 := \overline{\conv\{e^{-itE}:E\in S\}}$ from the origin:
\[
  \forall z \in \cC_0 \quad \operatorname{Re}\,z e^{i\alpha} \geq \delta > 0.
\]
Now, consider any sufficiently large $E_0 > 0$; then for all sufficiently
small $\Delta t > 0$,
\[
  \forall t\in [t_0-\Delta t,t_0+\Delta t] \ 
  \forall E \in S \cap [0;E_0] \quad \operatorname{Re}\,e^{-itE} e^{i\alpha} \geq \frac12 \delta.
\]
On the other hand, by the unboundedness of $H$, there must exist an 
energy $E \in S$, $E_1 > E_0$, such that $2\Delta t E_1 \geq 2\pi$, meaning
that $e^{-itE_1}$ makes at least one full revolution on the unit circle
as $t$ varies from $t_0-\Delta t$ to $t_0+\Delta_t$.
By the intermediate value theorem, there must be at least one $t$ in the
interval such that $0\in\overline{\conv(S\cap[0;E_0] \cup \{E_1\})}$.
Since $\Delta t$ was arbitrarily small, this shows that arbitrarily 
close to $t_0$ elements of $\cD$ can be found.
\end{proof}

\medskip
\begin{remark}
Typical Hamiltonians will have times where $\| \cU(t)-\id \|_\diamond < 2$.
A simple example is given by the quantum harmonic oscillator, 
$H = \sum_{n\geq 0} \left(n+\frac12\right)$. Because of the integer
spacing of the eigenenergies, it has period $\tau = 2\pi$,
in particular $\cU(\tau) = \id = \cU(0)$.

One can also construct Hamiltonians with infinitely many exceptional
times. Consider for instance $H = \sum_{n\geq 0} c^n \proj{n}$, with
and integer $c\geq 3$, which for times $\tau_k = 2\pi c^{-k}$ gives
rise to
\[
  U_{\tau_k} = \sum_{n\geq 0} e^{2\pi i c^{n-k}} \proj{n}
             = \sum_{n=0}^{k-1} e ^{2\pi i c^{n-k}} \proj{n} + \sum_{n\geq k} \proj{n}.
\]
This differs from $\1$ only on the sector of the $k$ smallest
energies, and is indeed arbitrarily close to $\1$ for sufficiently
large $c$.

We conjecture that periods affecting all but a finite number of
the eigenfrequencies are the only possible origin of exceptional times. 
Concretely, let $E_0$ be the ground state energy of $H$, and define the essential
periods as
\[
  \cP := \{ t : tE - tE_0 \in \ZZ \text{ for all but finitely many } E\in\operatorname{spec}\,H \}.
\]
Note that $\cP$ is an at most countably infinite set, and may be empty.
Then, it seems plausible that for all $t-t'\not\in\cP$, it holds
$\| \cU_t - \cU_{t'} \|_\diamond = 2$.
\end{remark}

\section{A diamond norm relative to a Hamiltonian and an energy bound.}
\label{sec:e-bounded-diamond}
Assume that system $A$ comes with a genuine Hamiltonian $H_A \geq 0$,
in fact normalised in such a way that the ground state energy is $0$
(we will call such Hamiltonians \emph{grounded}).
In particular, we want the pure state vectors $\ket{\psi}$ of finite
energy, $\bra{\psi} H_A \ket{\psi} < \infty$, to be dense in $A$. This
implies that the subspace spanned by those vectors with 
$\bra{\psi} H_A \ket{\psi} \leq E$ is also dense in $A$, for any $E>0$.

There are two stronger conditions that are useful sometimes:
First, it may be that $H_A$ has discrete spectrum and each eigenvalue has
only finite degeneracy; equivalently, this can be expressed as saying
that the projector $\{ H_A \leq E \}$ onto the eigenvalues up to $E$
has finite rank for all $E\geq 0$. We will call such Hamiltonians
simply \emph{discrete}.
Second, for later use with von Neumann entropies, Hamiltonians will at certain
moments also be required to satisfy the \emph{Gibbs hypothesis}, meaning that for all
$\beta > 0$, $Z = \tr e^{-\beta H_A} < \infty$, so that the Gibbs state
$\gamma(E) = \frac1Z e^{-\beta H_A}$, with a certain function $\beta=\beta(E)$,
is well-defined and characterised as the unique entropy maximiser subject 
to the energy bound $\tr \rho H_A \leq E$. Such Hamiltonians are always discrete,
but the converse is not true. If we have different Hamiltonians
$H_A$, $H_B$, etc, to distinguish, we denote the Gibbs state as $\gamma_A(E)$,
$\gamma_B(E)$, etc, for clarity.

\begin{definition}[Shirokov~\cite{Shirokov:E-diamond}]
  \label{def:diamond-E}
  For a Hermitian-preserving map $\Delta$, define the 
  \emph{$E$-constrained diamond norm}
  \begin{equation}
    \|\Delta\|_{\diamond E} := \sup \| (\Delta\ox\id_{C})\rho \|_1 
                                 \text{ s.t. } \rho \text{ state on }A\ox C,\ 
                                               \tr\rho^A H_A \leq E.
  \end{equation}
\end{definition}
By the same reasoning as for the diamond norm, the supremum can
be restricted to pure states, and w.l.o.g.~$C = A' \simeq A$.
A related definition was proposed by Pirandola \emph{et al.}~\cite{PLOB},
for the special case of quantum harmonic oscillators, and with the slight 
difference that the energy (photon number) bound was applied to both
system $A$ and reference $C$. The resulting norm is equivalent to the
one of Shirokov, but because of the properties proved in the following,
and the particular applications we have in mind, we have chosen the
latter.

That this definition is indeed a norm, and some of its elementary properties
(several of which noted already in \cite{Shirokov:E-diamond}), 
are contained in the following lemma.

\begin{lemma}
  \label{lemma:diamond-E-norm}
  For a grounded system Hamiltonian $H_A \geq 0$ and all $E>0$:
  \begin{enumerate}
    \item $\|\cdot\|_{\diamond E}$ is a norm on superoperators, in particular
      $\|\Delta\|_{\diamond E} = 0$ iff $\Delta=0$.

    \item For every $\Delta$, $\|\Delta\|_{\diamond E}$ is monotonically
      non-decreasing and concave in $E$. In particular, for $E < E'$,
      \[
        \|\Delta\|_{\diamond E} \leq \|\Delta\|_{\diamond E'} \leq \frac{E'}{E}\|\Delta\|_{\diamond E}.
      \]
      Thus, all $\|\cdot\|_{\diamond E}$ are topologically equivalent.

    \item The diamond norm is obtained as a limit:
      $\displaystyle{\sup_{E>0}} \|\Delta\|_{\diamond E} = \|\Delta\|_{\diamond}$.

    \item An alternative, and sometimes more convenient formula for the norm is
      \[
        \| \Delta \|_{\diamond E} = \sup_{\xi} \| (\Delta\ox\id)\xi \|_1 
                                  = \sup_{\xi,\, T} \tr [(\Delta\ox\id)\xi]T,
      \]
      where the suprema are over $\xi = \rho-\sigma$ with $\rho,\,\sigma \geq 0$,
      $\tr \rho + \tr \sigma \leq 1$ and $\tr\rho^A H_A,\,\tr\sigma^A H_A \leq E$,
      and the second supremum involves furthermore an operator with $-\1 \leq T \leq \1$.
      
    \item For any Hermitian-preserving superoperator $\Delta$ and any cptp map
      $T$ on a system $C$, with Hamiltonian grounded $H_C \geq 0$,
      $\| \Delta \|_{\diamond E} = \| \Delta \ox T \|_{\diamond E}$,
      where on the right hand side the energy bound is understood with
      respect to the Hamiltonian $H = H_A \ox \1_C + \1_A \ox H_C$.

    \item For any two Hermitian-preserving superoperators $\Delta_1$
      and $\Delta_2$ acting on $A_1$ and $A_2$, with Hamiltonians $H_{A_1}$ and $H_{A_2}$,
      respectively,
      \[
        \| \Delta_1 \ox \Delta_2 \|_{\diamond E} 
           \geq \max_{E=E_1+E_2} \| \Delta_1 \|_{\diamond E_1} \cdot \| \Delta_2 \|_{\diamond E_2},
      \]
      where the composite system carries the Hamiltonian $H = H_{A_1}\ox\1_{A_2} + \1_{A_1}\ox H_{A_2}$.
  \end{enumerate}
\end{lemma}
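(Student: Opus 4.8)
The plan is to take the six assertions in turn, since each is essentially self-contained, reserving the real effort for the nontrivial direction of item~(4). For item~(1), positivity and homogeneity are inherited directly from $\|\cdot\|_1$, and the triangle inequality follows by applying the triangle inequality for $\|\cdot\|_1$ to $(\Delta_1+\Delta_2)\ox\id$ pointwise and taking suprema; the only delicate point is definiteness. If $\|\Delta\|_{\diamond E}=0$ then $(\Delta\ox\id)\proj{\psi}=0$ for every unit vector $\ket{\psi}$ with $\bra{\psi}H_A\ket{\psi}\le E$, and since the span of such vectors is dense in $A$ by hypothesis, a polarization identity together with trace-norm continuity of $\Delta\ox\id$ forces $\Delta$ to vanish on a dense subset of $\cT(A\ox C)$, hence $\Delta=0$. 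For item~(2), monotonicity is clear because enlarging $E$ enlarges the feasible set; for concavity I would take near-optimal test states $\rho_1,\rho_2$ for $E_1,E_2$ and form the flagged state $\rho=\lambda\,\rho_1\ox\proj{0}_F+(1-\lambda)\,\rho_2\ox\proj{1}_F$ on an enlarged reference. Its $A$-marginal is the convex combination $\lambda\rho_1^A+(1-\lambda)\rho_2^A$, so its energy is at most $\lambda E_1+(1-\lambda)E_2$, while the orthogonality of the flags makes $\|(\Delta\ox\id)\rho\|_1$ exactly $\lambda\|(\Delta\ox\id)\rho_1\|_1+(1-\lambda)\|(\Delta\ox\id)\rho_2\|_1$. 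Concavity together with $\|\Delta\|_{\diamond E}\ge0$ at $E=0$ then makes $E\mapsto\|\Delta\|_{\diamond E}/E$ non-increasing (write $E=\tfrac{E}{E'}E'+(1-\tfrac{E}{E'})\cdot0$ and apply concavity), which is precisely the claimed sandwich and hence topological equivalence. Item~(3) is then easy: $\sup_E\|\Delta\|_{\diamond E}\le\|\Delta\|_\diamond$ is trivial, and for the reverse any pure test state nearly saturating $\|\Delta\|_\diamond$ is approximated in Hilbert-space norm by a finite-energy vector, which by trace-norm continuity of $\Delta\ox\id$ changes the value only slightly and is feasible for $\|\Delta\|_{\diamond E}$ at $E$ equal to its finite energy.

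In item~(4) the second equality is just the variational (dual) formula $\|X\|_1=\sup_{-\1\le T\le\1}\tr XT$ applied to the Hermitian operator $X=(\Delta\ox\id)\xi$, so it is the first equality that is the heart of the matter. The direction $\sup_\xi\|(\Delta\ox\id)\xi\|_1\ge\|\Delta\|_{\diamond E}$ is immediate on taking $\sigma=0$. For the reverse I would convert the difference $\xi=\rho-\sigma$ into a single pure test state by a coherent flag construction: purify $\rho$ and $\sigma$ and superpose the purifications against orthogonal flag vectors $\ket{0}_F,\ket{1}_F$ in an extended reference, so that dephasing or otherwise processing the flag reproduces $\|(\Delta\ox\id)\xi\|_1$ up to the freedom allowed by trace-norm monotonicity. \emph{I expect this to be the main obstacle, and it is entirely a matter of energy bookkeeping:} the superposed vector carries $A$-energy $\tr\rho^A H_A+\tr\sigma^A H_A$, so a naive construction only respects the bound $2E$. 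The crux is to reorganise it --- exploiting the constraint $\tr\rho+\tr\sigma\le1$ jointly with the homogeneity and concavity established in item~(2) --- so that the single state one finally feeds into $\|\cdot\|_{\diamond E}$ genuinely obeys the energy bound $E$ and not merely $2E$; this is exactly the step where a careless argument loses a factor of two.

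For item~(5), the inequality $\|\Delta\ox T\|_{\diamond E}\ge\|\Delta\|_{\diamond E}$ follows by appending the $C$-system in its (zero-energy) ground state $\proj{g}_C$: this padding costs no energy under $H=H_A\ox\1_C+\1_A\ox H_C$, and since $T$ is trace preserving the output trace norm is unchanged. Conversely, factoring $\Delta\ox T=(\id\ox T)\circ(\Delta\ox\id)$ and using that the channel $\id\ox T$ is trace-norm contractive gives $\|(\Delta\ox T\ox\id)\omega\|_1\le\|(\Delta\ox\id)\omega\|_1$; because $H_C\ge0$ implies $\tr\omega^A H_A\le\tr[\omega H]\le E$, the state $\omega$ is a legitimate test state for $\Delta$ with reference $C$ together with the original reference, yielding the bound $\|\Delta\|_{\diamond E}$. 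Finally, item~(6) I would prove by simply feeding a product $\rho_1\ox\rho_2$ of near-optimal test states for $\|\Delta_1\|_{\diamond E_1}$ and $\|\Delta_2\|_{\diamond E_2}$: under the additive Hamiltonian the energy is $\tr\rho_1^{A_1}H_{A_1}+\tr\rho_2^{A_2}H_{A_2}\le E_1+E_2=E$, and since $(\Delta_1\ox\Delta_2\ox\id)(\rho_1\ox\rho_2)$ factorises while the trace norm is multiplicative over tensor products, the output norm equals $\|(\Delta_1\ox\id)\rho_1\|_1\cdot\|(\Delta_2\ox\id)\rho_2\|_1$; optimising over the splitting $E=E_1+E_2$ then gives the stated inequality.
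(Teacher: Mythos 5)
The paper itself offers no argument for this lemma beyond the single word ``Straightforward'', so the real question is whether your proposal stands on its own. Items (1), (2), (3), (5) and (6) essentially do: the flagged-state argument for concavity, the derivation of the sandwich $\|\Delta\|_{\diamond E'}\le\frac{E'}{E}\|\Delta\|_{\diamond E}$ from concavity plus nonnegativity, the finite-energy approximation for item (3), the ground-state padding together with the factorisation $\Delta\ox T=(\id\ox T)\circ(\Delta\ox\id)$ and $H_C\ge0$ for item (5), and multiplicativity of the trace norm for item (6) are all correct and are surely the intended routine arguments. Two small patches are needed: in item (1), a superposition of two energy-$\le E$ vectors need not have energy $\le E$ (its norm can be small), so you should first invoke item (2) to conclude $\|\Delta\|_{\diamond E'}=0$ for \emph{every} $E'$ and only then polarize over all finite-energy vectors; and a grounded $H$ need not possess an actual ground eigenvector, so where you ``append the ground state'' you should append a vector of energy $\epsilon$ and remove $\epsilon$ at the end via the continuity in $E$ from item (2).

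The genuine gap is item (4), and it is exactly the step you flagged and then left open: you never carry out the ``reorganisation'' that is supposed to bring the energy of the coherent-flag test vector from $2E$ down to $E$. Under the constraint as literally stated (each of $\tr\rho^A H_A$ and $\tr\sigma^A H_A$ at most $E$ for the subnormalized $\rho,\sigma$), no such reorganisation can exist, because the claimed equality is then \emph{false}. Take $A=\CC^2$, $H_A=M\proj{1}$ with $M\ge 2E$, and the cptp maps $\cN_i(X)=\bra{0}X\ket{0}\,\alpha+\bra{1}X\ket{1}\,\beta_i$ with $\beta_1\perp\beta_2$; for $\Delta=\cN_1-\cN_2$ one has $(\Delta\ox\id)\omega=(\beta_1-\beta_2)\ox\bra{1}\omega\ket{1}$, where $\bra{1}\omega\ket{1}$ is the corresponding block of $\omega$ acting on the reference, so $\|\Delta\|_{\diamond E}=2E/M$. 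But $\rho=\frac{E}{M}\proj{1}\ox\gamma+(\frac12-\frac{E}{M})\proj{0}\ox\gamma_0$ and $\sigma=\frac{E}{M}\proj{1}\ox\gamma'+(\frac12-\frac{E}{M})\proj{0}\ox\gamma_0$ with $\gamma\perp\gamma'$ satisfy every stated constraint ($\tr\rho+\tr\sigma=1$, $\tr\rho^AH_A=\tr\sigma^AH_A=E$), yet $\|(\Delta\ox\id)(\rho-\sigma)\|_1=4E/M=2\|\Delta\|_{\diamond E}$. What \emph{is} true, and provable by precisely the construction you sketch, is the statement with the summed constraint $\tr\rho^AH_A+\tr\sigma^AH_A\le E$: purify $\rho,\sigma$ to subnormalized vectors $\ket{r},\ket{s}$ on $A\ox C\ox D$, set $\ket{\phi}=\ket{r}\ket{0}_F+\ket{s}\ket{1}_F$, and note $\xi=\tr_{DF}\bigl[\proj{\phi}\,(\1\ox(\proj{0}-\proj{1}))\bigr]$, whence $\|(\Delta\ox\id)\xi\|_1\le\|(\Delta\ox\id)\proj{\phi}\|_1$ by contractivity; writing $N=\braket{\phi}{\phi}=\tr\rho+\tr\sigma\le1$, the normalized vector $\ket{\phi}/\sqrt{N}$ has energy at most $E/N$, and the sandwich of item (2) gives $\|(\Delta\ox\id)\proj{\phi}\|_1\le N\,\|\Delta\|_{\diamond E/N}\le\|\Delta\|_{\diamond E}$. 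This is the interplay of $\tr\rho+\tr\sigma\le1$ with item (2) that you anticipated, but it closes only under the summed energy constraint: with the individual constraints it terminates at $\|\Delta\|_{\diamond 2E}\le2\|\Delta\|_{\diamond E}$, and the example above shows that factor of two is unavoidable. So as written your proposal proves neither the literal statement (which needs to be repaired, e.g.\ to the summed constraint) nor the corrected one; the missing step is not bookkeeping but an actual obstruction.
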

\begin{proof}
Straightforward.
\end{proof}

\medskip
Furthermore,
Shirokov~\cite[Prop.~3]{Shirokov:E-diamond} showed that convergence 
w.r.t.~$\|\cdot\|_{\diamond E}$ implies strong convergence, and that
for a discrete Hamiltonian $H_A$ the two topologies are equivalent.

\begin{lemma}
  \label{lemma:SDP}
  The energy-constrained diamond norm is given by a semidefinite programme
  (albeit an infinite dimensional one):
  \[
    \| \Delta \|_{\diamond E} = \max \tr J V \ \text{ s.t. }\ -\1\ox\rho \leq V \leq \1\ox\rho,\ 
                                                               \tr \rho = 1,\ \tr \rho H \leq E,
  \]
  where $J = (\Delta \ox \id)\Phi$, and $\ket{\Phi} = \sum_n \ket{n}\ket{n}$
  is the formal maximally entangled state, with the property that its partial
  trace is $\1$. The computational basis here and in the following 
  is the eigenbasis of $H$.
  
  In the special case of $\Delta = \cN_1-\cN_2$, the difference between two
  cptp maps (equivalently, $\Delta$ is trace-annihilating), this can be simplified to
  \[\begin{split}
    \frac12 \|\Delta\|_{\diamond E} 
                            &= \max \tr J W \ \text{ s.t. }\ 0 \leq W \leq \1 \ox \rho,\ 
                                                                \tr \rho = 1,\ \tr \rho H \leq E \\
                            &= \min x + yE \ \text{ s.t. } \ Z \geq J,\ Z \geq 0,\ 
                                                                x,y \geq 0, x\1 + yH \geq \tr_A Z.
  \end{split}\]
\end{lemma}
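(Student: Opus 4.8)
The plan is to translate the definition into the language of the Choi operator and then apply Lagrange duality. First I would invoke the reduction noted after Definition~\ref{def:diamond-E}: the supremum defining $\|\Delta\|_{\diamond E}$ is attained on pure vectors $\ket\psi$ on $A\ox A'$ whose $A$-marginal obeys the energy bound. Writing such a vector via the ``ricochet'' identity as $\ket\psi=(\1_A\ox K)\ket\Phi$, one finds $(\Delta\ox\id)\proj\psi=(\1_B\ox K)J(\1_B\ox K^\dagger)$ with $J=(\Delta\ox\id)\Phi$, and $A$-marginal $\rho^A=(K^\dagger K)^{T}$. Because the computational basis is the eigenbasis of $H$, the Hamiltonian is diagonal and $H^{T}=H$, so setting $\rho:=K^\dagger K$ (a state on $A'$) the energy constraint reads exactly $\tr(\rho H)\le E$. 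Feeding in the trace-norm identity $\|X\|_1=\max_{-\1\le T\le\1}\tr(TX)$ and putting $V:=(\1_B\ox K^\dagger)T(\1_B\ox K)$, conjugation sends $-\1\le T\le\1$ to $-\1\ox\rho\le V\le\1\ox\rho$ and $\tr(TX)$ to $\tr(JV)$. The reverse inclusion is obtained by choosing $K=\rho^{1/2}$ and $T=(\1\ox\rho^{-1/2})V(\1\ox\rho^{-1/2})$ on the support of $\rho$, which shows every admissible $(V,\rho)$ is realised. This already gives the first formula, with no duality involved.

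For the trace-annihilating case $\Delta=\cN_1-\cN_2$ I would substitute $V=2W-\1\ox\rho$. Then $-\1\ox\rho\le V\le\1\ox\rho$ becomes $0\le W\le\1\ox\rho$, and since $\Delta$ kills traces one has $\tr_B J=0$, whence $\tr[J(\1\ox\rho)]=\tr[(\tr_B J)\rho]=0$ and $\tr(JV)=2\tr(JW)$. This yields the primal programme $\tfrac12\|\Delta\|_{\diamond E}=\max\{\tr JW:0\le W\le\1\ox\rho,\ \tr\rho=1,\ \tr\rho H\le E\}$.

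The dual is then produced by standard Lagrangian duality on this primal. I would attach a positive operator $Z\ge0$ to $\1\ox\rho-W\ge0$, a free scalar $x$ to $\tr\rho=1$, and $y\ge0$ to $\tr\rho H\le E$, giving
\[
  L=\tr\bigl[(J-Z)W\bigr]+\tr\bigl[(\tr_B Z-x\1-yH)\rho\bigr]+x+yE .
\]
Maximising freely over $W\ge0$ forces $Z\ge J$ and over $\rho\ge0$ forces $x\1+yH\ge\tr_B Z$, with both maxima equal to $0$; the objective collapses to $x+yE$. Positivity $x\ge0$ is automatic, since evaluating $x\1+yH\ge\tr_B Z\ge0$ on the (zero-energy) ground state gives $x\ge0$. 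Here $\tr_B$ is the partial trace over the output system, which is the operator written $\tr_A Z$ in the statement. Weak duality $p^*\le d^*$ is immediate from the construction.

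The genuinely delicate point --- and the one I expect to be the main obstacle --- is that everything lives in infinite dimension: $\ket\Phi$ is not normalisable, $J$ is typically unbounded (for the identity channel $J=\proj\Phi$), and neither the attainment of the $\max$ and $\min$ nor the absence of a duality gap $p^*=d^*$ is supplied by finite-dimensional SDP theory. I would address this by truncating with the finite-rank spectral projectors $\{H\le E'\}$, on whose range the programme is finite-dimensional and strictly feasible (take $\rho$ a full-rank low-energy state and $W=\tfrac12\1\ox\rho$), so Slater's condition gives strong duality and attainment there. Letting $E'\to\infty$ and using the density of finite-energy vectors (the grounded assumption) together with the monotonicity and topological equivalence of the $\|\cdot\|_{\diamond E}$ from Lemma~\ref{lemma:diamond-E-norm}(2), I would pass the equalities to the limit; ensuring that the optimisers $(\rho,W)$ and the dual $(Z,x,y)$ converge while remaining feasible is where the real work lies.
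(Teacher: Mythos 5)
Your proposal is correct and follows essentially the same route as the paper's (very terse) proof: Watrous's SDP characterisation of the diamond norm via the ricochet/Choi representation, one extra Lagrange multiplier to account for the energy bound, and a truncation/limiting argument to make sense of the singular $\Phi$ and unbounded $J$ and to carry strong duality over from the finite-dimensional case. Your write-up is in fact more explicit than the paper's, which simply cites Watrous, notes the added multiplier, and flags the same infinite-dimensional caveat you identified as the main obstacle.
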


\begin{proof}
Basically, the reasoning is precisely that of Watrous \cite{Watrous:diamond},
with the added energy constraint. The derivation of the dual has one more
Lagrange multiplier variable to keep track of the energy bound.

The biggest issue is that $\Phi$ is only a singular operator, and hence $J$,
too. Note however, that to make sense of the traces in the primal SDPs, 
it is enough to have it as a singular state, i.e.~as the limit of, say,
the sequence $\Phi_n=\proj{\Phi_n}$, $\ket{\Phi_n} = \sum_{k=1}^n \ket{k}\ket{k}$
($n\rightarrow\infty$), or of unnormalised entangled squeezed states 
$\sum_k \lambda^k \ket{k}\ket{k}$ ($\lambda\nearrow 1$).
Apart from that, the reasoning for strong duality and hence equality of
primal and dual optimal values, is the same.
\end{proof}

\medskip
As the usual diamond norm, the energy-constrained version has an 
operational interpretation in terms of discriminating channels 
with test states of bounded energy:
$\frac12 \bigl( 1-\| p\cN_1 - (1-p)\cN_2 \|_{\diamond E} \bigr)$ is 
the minimum error probability of distinguishing $\cN_1$ from $\cN_2$, 
with prior probabilities $p$ and $1-p$, respectively, 
when we are allowed preparation of a probe state $\rho^{AC}$ with energy
of $\rho^A$ bounded by $E$, one application of the unknown channel, 
and an arbitrary measurement on the system $BC$.

Before moving on, we pause to mention the closely related notion of
energy-constrained channel fidelity and Bures distance, explored
in \cite{SWAT} and \cite{Shirokov:E-fidelity}:
\[
  \beta_E(\cN_1,\cN_2) = \sup \sqrt{1- F\bigl( (\cN_1\ox\id_C)\rho, (\cN_2\ox\id_C)\rho \bigr)^2}
                                                    \text{ s.t. } \rho \text{ state on }A\ox C,\ 
                                                                  \tr\rho^A H_A \leq E.
\]
It relates to $\|\cN_1-\cN_2\|_{\diamond E}$ via inequalities analogous 
to the relation between Bures distance of states and their trace distance.
Notable, in \cite[Prop.~1]{Shirokov:E-fidelity} it is shown that 
$\beta_E(\cN_1,\cN_2)$ is the minimum $\beta(\mathcal{V}_1,\mathcal{V}_2)$ over all
isometric Stinespring dilations $\mathcal{V}_i$ of $\cN_i$.

\medskip
\begin{remark}
One might wonder why we only impose the energy constraint at the channel
input, and indeed, from a fundamental perspective of paying attention 
to the complete physical resources, it will be more natural to consider
not only the energy required to prepare the test state, but also the energy 
cost of making the measurement at the end.
Navascu\'es and Popescu~\cite{NavascuesPopescu:vision} have done just that,
considering the energetic cost of breaking the conservation law of energy 
at the output system, by the use of an explicit reference frame. 
As far as we know, the combined energy cost of the test state and the
reference frame in discrimination has not been investigated. This in itself
is a well-posed mathematical problem, leading however to further question,
such as the one of the reusability of the state after the measurement, and
a suitable notion of amortised cost.

However, in the present context we can state that the somewhat asymmetric
choice of imposing energy bounds on the test states but not on the measurements
is motivated by the particular applications we have in mind.
\end{remark}


\section{Application I: Continuity of one-parameter groups and semigroups}
\label{sec:continuous-time}
We will now show that the unitary time evolution generated by a densely defined
Hamiltonian $H$ is norm-continuous for $\|\cdot\|_{\diamond E}$, with respect to
the same $H$. Furthermore, we will discuss some semigroup examples.

\subsection{The unitary group $\mathbf{e^{-itH}}$}

For the unitary time evolution $U_t = e^{-itH}$ and $\cU(t) = U_t\cdot U_t^\dagger$, 
with an $H$ that has smallest eigenvalue $0$, we would like to characterise 
$\| \cU(t)-\cU(0) \|_{\diamond E}$ as a function of $E$ and $t$.
Rather than trying to calculate this exactly for concrete Hamiltonians, 
say a spin-$\frac12$ with $\sigma_Z$ Hamiltonian, or a harmonic oscillator
(all of which interesting exercises),
we want to find a universal upper bound that only depends
on $t$ and $E$, uniformly for all Hamiltonians, i.e.~we would like to
evaluate
\[
  \sup_H \frac12 \| \cU(t)-\cU(0) \|_{\diamond E} =: s(t,E),
\] 
or at least upper and lower bound it. This function encodes the
speed limit bounds of Margolus and Levitin \cite{ML-speed}. 
In particular,
if $s(t,E)=1$, it means that there exists a state of energy up
to $E$ that is turned into an orthogonal state in time $t$, so
any bound away from $1$ says that the state cannot move ``too far''
in time $t$.
We are interested in showing that $s(t,E) \rightarrow 0$ for every
fixed $E$ and $t \rightarrow 0$. But also for non-zero $t$ are interesting,
because in a certain sense the function $s(t,E)$ is a refined treatment 
of the quantum speed limit. Here we present a simple upper bound.

\begin{theorem}
  \label{thm:norm-speed-limit}
  The unitary time evolution generated by a grounded Hamiltonian $H$ is
  uniformly continuous in $\|\cdot\|_{\diamond E}$ w.r.t.~the same $H$:
  \begin{equation}
    \frac12 \| \cU(t)-\cU(0) \|_{\diamond E} \leq \sqrt[3]{4tE}.
  \end{equation}
\end{theorem}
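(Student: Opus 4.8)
The plan is to reduce the supremum defining $\|\cU(t)-\cU(0)\|_{\diamond E}$ to pure inputs, exactly as noted after Definition~\ref{def:diamond-E} for the unconstrained diamond norm: by purification and contractivity of the trace norm under partial trace it suffices to take a pure state $\ket{\psi}$ on $A\ox A'$ with $\bra{\psi}(H\ox\1)\ket{\psi}\leq E$, the energy constraint surviving purification because it only sees the reduced state on $A$. On such an input both channels act by conjugation with a unitary that is trivial on the reference, so the two outputs are again pure vectors, $(\cU(t)\ox\id)\proj{\psi}=\proj{\phi_t}$ with $\ket{\phi_t}=(U_t\ox\1)\ket{\psi}$, and $(\cU(0)\ox\id)\proj{\psi}=\proj{\psi}$. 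The quantity to estimate is therefore the trace distance of two pure states,
\[
  \frac12\|\proj{\phi_t}-\proj{\psi}\|_1=\sqrt{1-|\braket{\psi}{\phi_t}|^2},
\]
uniformly over all admissible $\ket{\psi}$ and all grounded $H$.

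Next I would trade this fidelity expression for the Euclidean distance of the two vectors, using the elementary bound $1-|c|^2\leq 2(1-\operatorname{Re}c)$ for $|c|\leq1$; with $c=\braket{\psi}{\phi_t}$ and $\|\ket{\phi_t}-\ket{\psi}\|^2=2-2\operatorname{Re}c$ this gives $1-|\braket{\psi}{\phi_t}|^2\leq\|\ket{\phi_t}-\ket{\psi}\|^2$. It then remains to control
\[
  \bigl\|((U_t-\1)\ox\1)\ket{\psi}\bigr\|^2
    =\bra{\psi}\bigl((2\cdot\1-U_t-U_t^\dagger)\ox\1\bigr)\ket{\psi}
    =\int_0^\infty 2\bigl(1-\cos(t\lambda)\bigr)\,d\nu(\lambda),
\]
where $\nu$ is the spectral distribution of $H$ in the input, a probability measure on $[0,\infty)$ whose first moment is exactly the constrained energy, $\int_0^\infty\lambda\,d\nu(\lambda)=\bra{\psi}(H\ox\1)\ket{\psi}\leq E$.

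The crux is then a single pointwise estimate, $1-\cos x\leq|x|$ for all real $x$, which is immediate since $x\mapsto x-1+\cos x$ vanishes at $0$ and has nonnegative derivative $1-\sin x$ on $[0,\infty)$. Applied with $x=t\lambda\geq0$ it yields $\int 2(1-\cos t\lambda)\,d\nu\leq 2t\int\lambda\,d\nu\leq 2tE$, hence
\[
  \frac12\|\cU(t)-\cU(0)\|_{\diamond E}\leq\sqrt{2tE}.
\]
Since $\sqrt{2tE}\leq\sqrt[3]{4tE}$ whenever $tE\leq2$, while for $tE>2$ the right-hand side already exceeds the trivial value $1\geq\frac12\|\cU(t)-\cU(0)\|_{\diamond E}$, this implies the stated inequality in all cases; it is in fact sharper, matching the true $\sqrt{tE}$ scaling of the speed limit for small $tE$.

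I do not anticipate a genuine obstacle along this route: the only points demanding care are the legitimacy of the pure-state reduction in infinite dimension and the passage to the spectral measure $\nu$, both of which are the same functional-analytic bookkeeping already needed for the unconstrained diamond norm. A cruder but perhaps more transparent variant, avoiding the inequality $1-\cos x\leq|x|$, is to split $\ket{\psi}$ at an energy cutoff $E_0$: Markov's inequality bounds the weight above $E_0$ by $E/E_0$, while on the complementary sector $\|((U_t-\1)\ox\1)\ket{\psi}\|\leq tE_0$; optimising $tE_0+2\sqrt{E/E_0}$ over $E_0$ reproduces a bound of cube-root shape $\mathrm{const}\cdot(tE)^{1/3}$, which is the form in which the quantum-speed-limit content is most directly visible.
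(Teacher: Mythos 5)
Your proof is correct, and it takes a genuinely different route from the paper's. The paper's argument truncates the input at an energy cutoff $E/\epsilon$ (gentle measurement lemma, cost $\sqrt{\epsilon}$ in trace distance), bounds the evolution speed of the truncated state by the trace norm of the generator $i[\cdot,H]$ --- which on pure states equals twice the energy standard deviation, hence at most $E/\epsilon$ after truncation --- and then optimises the cutoff $\epsilon$; the cube root is an artefact of this truncation-plus-variance bookkeeping. You instead exploit the fact that a unitary channel preserves purity: after the (legitimate) reduction to pure inputs with $\bra{\psi}(H\ox\1)\ket{\psi}\leq E$, both outputs are pure, the overlap is computed through the spectral measure $\nu$ of $H$ in the reduced input state, and the pointwise estimate $1-\cos x\leq |x|$ requires only the \emph{first} moment $\int\lambda\,d\nu(\lambda)\leq E$, with no truncation and no variance. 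This yields the sharper bound
\begin{equation*}
  \frac12 \| \cU(t)-\cU(0) \|_{\diamond E} \leq \sqrt{2tE},
\end{equation*}
and your case analysis closing the gap to the stated form is complete: $\sqrt{2tE}\leq\sqrt[3]{4tE}$ exactly when $tE\leq 2$, while for $tE>2$ the right-hand side exceeds $1$, which always dominates half the $\diamond E$-distance of two channels. Your bound also matches the true $\sqrt{tE}$ scaling (as your qubit-type test states show) and improves the Margolus--Levitin-style corollary from $\tau^\perp\geq\frac{1}{4E}$ to $\tau^\perp\geq\frac{1}{2E}$. What the paper's more laborious route buys is flexibility: the truncation-plus-generator argument never uses that outputs stay pure, so it is the template that can be adapted to non-unitary one-parameter semigroups (the broader theme of that section), whereas your overlap computation is specific to unitary conjugation; your own ``cruder variant'' sketched at the end is essentially a Markov-inequality version of the paper's truncation argument and correctly reproduces its cube-root shape.
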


\begin{proof}
Consider that $\rho$ has energy bounded by $E$, hence if we truncate the 
coherence with energy levels above $\frac{E}{\epsilon}$, 
we introduce an error less than $\sqrt{\epsilon}$: More precisely,
with the projector $P = \left\{ H \leq \frac{E}{\epsilon} \right\}$,
and letting $\rho' := P \rho P$, we have 
$\frac12 \| \rho-\rho' \|_1 \leq \sqrt{\epsilon}$, 
both at time $0$ and at all later times (this is a manifestation of the 
gentle measurement lemma~\cite{Winter:cq-strong,OgawaNagaoka:cq-code}). So,
\begin{equation}\begin{split}
  \label{eq:xx}
  \frac12 \| \rho(t)-\rho(0) \|_1 &\leq \frac12 \| \rho'(t)-\rho'(0) \|_1 + \sqrt{\epsilon} \\
                                  &\leq t \frac{E}{2\epsilon} + \sqrt{\epsilon},
\end{split}\end{equation}
where the second step is obtained as follows:
We can subdivide the interval $[0;t]$ into arbitrarily many equal steps,
and using the triangle inequality and taking the limit (derivative), we get
\[\begin{split}
  \| \rho'(t)-\rho'(0) \|_1 &\leq n \bigl\| \rho'(t/n)-\rho'(0) \bigr\|_1 \\
                            &\rightarrow t \bigl\| (\id\ox G)\rho' \bigr\|_1,
\end{split}\]
with the generator map $G(\sigma) = i[\sigma,H]$.
By virtue of purification, the maximum of the right hand side is
attained on a pure state, and this pure state can be written 
$\varphi=\proj{\varphi}$, with
$\ket{\varphi} = \sum_k \ket{\phi_k}\ket{k},$
where for simplicity we assume a discrete spectrum of $H$, with
normalised eigenstates $\ket{k}$ (of eigenvalue $E_k \leq \frac{E}{\epsilon}$), 
and suitable vectors $\ket{\phi_k}$ in the reference system, such that
\[
  \langle\varphi\ket{\varphi} = \sum_k \langle\phi_k\ket{\phi_k} = 1.
\]
Now, $(\id \ox G)\varphi = i\proj{\varphi}(\1 \ox H) - i(\1 \ox H)\proj{\varphi}$
has rank $2$ and trace $0$, so its eigenvalues are $x$ and $-x$, and
its trace norm is $2x$. To determine x, we make the observation that 
\[
  2x^2 = \tr [(\id \ox G)\varphi]^2
       = \sum_{k\ell} (E_k-E_\ell)^2 \langle\phi_k\ket{\phi_k} \langle\phi_\ell\ket{\phi_\ell}
       = 2(\Delta E)^2,
\]
where $\Delta E$ is the standard deviation of the energy with
respect to the probability distribution $p_k = \langle\phi_k\ket{\phi_k}$.
Thus,
\[
  \bigl\| (\id\ox G)\rho' \bigr\|_1 \leq \bigl\| (\id \ox G)\varphi \bigr\|_1 = 2 \Delta E.
\]
But $\rho'$ has energy between $0$ and $\frac{E}{\epsilon}$, 
so its energy variance is bounded as $\Delta E \leq \frac{E}{2\epsilon}$.

Now, choosing the cutoff in eq.~(\ref{eq:xx})
optimally, namely $\epsilon = \left(\frac12 tE\right)^{2/3}$, we get
\[
  \frac12 \| \rho(t)-\rho(0) \|_1 \leq 2\sqrt[3]{\frac12 tE},
\]
and because the state was arbitrary for the energy constraint,
this concludes the proof.
\end{proof}

\medskip
As a corollary, 
this formulation includes the result of Margolus-Levitin \cite{ML-speed}
that the time a system takes to move an initial state of energy $E$ to
an orthogonal state is lower bounded $\tau^\perp \geq \frac{1}{4E}$.

\subsection{One-parameter semigroups}

Going back to Proposition \ref{prop:attenuators}, it would be nice to bound
the distance between attenuator channels $\cA_\eta$ and $\cA_{\eta'}$, 
with respect to the grounded version of the quantum harmonic oscillator, 
i.e.~the number Hamiltonian $H=\sum_{n=0}^\infty n \proj{n}$. 
We have not yet been able to compute its value exactly, but it is clear 
from the results of Shirokov \cite{Shirokov:E-diamond} that 
$\| \cA_\eta - \cA_{\eta'} \|_{\diamond E}$ goes to $0$ when $\eta' \rightarrow \eta$,
because $\cA_{\eta'} \rightarrow \cA_{\eta}$ strongly and the Hamiltonian
is discrete. Because of the compactness of the complex unit disc, this convergence
must in fact be uniform in $|\eta-\eta'|$.

Similar arguments can be made for general families of Gaussian channels
with continuous description on the level of covariance matrices and displacements. 
It seems however that computing the actual value of the energy-constrained
diamond distance between two Gaussian channels is a difficult problem.

\subsection{Continuous-variable teleportation}

A related family of maps, though not a semigroup, is obtained by
running the well-known continuous-variable teleportation protocol
with asymptotically highly squeezed entangled states \cite{BraunsteinKimble}. 
It can be shown that this sequence of channels converges to the identity
channel with respect to the energy-constrained diamond norm \cite{PLOB}. 
However, as explained in \cite{Shirokov:E-diamond} and more recently in more detail
in \cite{Wilde:CV-teleportation}, the precise norm used to express this 
statement is not that relevant, because the convergence is most elegantly 
stated to be with respect to the strong topology. The strong convergence
is enough in most circumstances of asymptotic analysis, but there might be
situations in which a quantitative assessment of the error made at finite
squeezing is required. Again, we are not aware of tight bounds on the
quality of the approximate teleportation, which is a special instance of
the distance between Gaussian channels mentioned in the previous subsection.

\section{Application II: Continuity of entropic quantities w.r.t.~$\mathbf{\|\cdot\|_{\diamond E}}$.}
\label{sec:continuity-S}
For two channels $\cN_1$, $\cN_2$ and a test state $\rho^{AC}$ with $\tr \rho^A H_A \leq E$, 
we start by proving a lemma to compare the conditional entropies $S(A|C)_{\omega_i}$ of
the two states $\omega_i^{BC} = (\cN_i\ox\id_C)\rho$, along the lines of \cite{Winter-S}.
To do so, we need a trace distance bound between the states, which is obtained
by supposing $\frac12 \| \cN_1-\cN_2 \|_{\diamond E} \leq \epsilon$. 
We will also need an energy bound on $\omega_i^B$, which first of all requires a
grounded Hamiltonian $H_B \geq 0$ also satisfying the Gibbs hypothesis, and
demand $\tr \omega_i^B H_B \leq \widetilde{E}$.
These ingredients directly lead to the following bounds, with the
well-known monotonic and concave function $g(x) = (1+x)\log(1+x)-x \log x$.

\begin{lemma}
  \label{lemma:conditional-entropy}
  Let $\cN_1$ and $\cN_2$ be quantum channels from $A$ to $B$,
  with $\frac12 \| \cN_1-\cN_2 \|_{\diamond E} \leq \epsilon < 1$.
  Then, for any state $\rho^{AC}$ with $\tr\rho^A H_A \leq E$,
  $\omega_i := (\cN_i\ox\id)\rho$, and
  $\tr \omega_i^B H_B \leq \widetilde{E}$,
  it holds
  \begin{equation}
    \bigl| S(B|C)_{\omega_1} - S(B|C)_{\omega_2} \bigr| 
        \leq 6\epsilon' S\bigl( \gamma_B(\widetilde{E}/\delta) \bigr) + 3 g(\epsilon'),
  \end{equation}
  for any $\epsilon < \epsilon' \leq 1$ and $\delta = \frac{\epsilon'-\epsilon}{1+\epsilon'}$.

  Furthermore, we also have the ``trivial'' bound
  \begin{equation}
    \bigl| S(B|C)_{\omega_1} - S(B|C)_{\omega_2} \bigr| \leq 2 S\bigl( \gamma_B(\widetilde{E}) \bigr).
  \end{equation}
\end{lemma}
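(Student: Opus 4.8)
The plan is to prove the two entropy bounds by combining an energy-constrained trace distance estimate with the Alicki-Fannes/Winter-type continuity bound for the conditional entropy of infinite-dimensional states. The key inputs are: (i) the hypothesis $\frac12\|\cN_1-\cN_2\|_{\diamond E}\leq\epsilon$ which, by the operational/variational characterisation of the constrained diamond norm (Definition~\ref{def:diamond-E} and Lemma~\ref{lemma:diamond-E-norm}(4)), translates directly into a trace-distance bound $\frac12\|\omega_1-\omega_2\|_1\leq\epsilon$ on the output states $\omega_i=(\cN_i\ox\id)\rho$, since $\rho$ is an admissible test state with $\tr\rho^A H_A\leq E$; and (ii) the energy bound $\tr\omega_i^B H_B\leq\widetilde E$ on the reduced output states, with $H_B$ grounded and satisfying the Gibbs hypothesis so that the entropy maximiser $\gamma_B(\cdot)$ and the function $g$ are available.

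**First I would** establish the main bound \eqref{eq:xx}-style inequality by invoking the continuity estimate for conditional entropy proved in \cite{Winter-S}. The strategy there is the standard coupling/maximal-correlation construction: given two states $\omega_1,\omega_2$ with $\frac12\|\omega_1-\omega_2\|_1\leq\epsilon$, one introduces the auxiliary parameter $\epsilon'>\epsilon$ and the associated $\delta=\frac{\epsilon'-\epsilon}{1+\epsilon'}$, and writes $\omega_1$ and $\omega_2$ as parts of a common purification or a coupling so that their difference is absorbed into a sub-normalised operator of trace weight controlled by $\epsilon'$. The conditional entropy $S(B|C)=S(BC)-S(C)$ is then controlled term by term, where the energy constraint $\tr\omega_i^B H_B\leq\widetilde E$ (rescaled to $\widetilde E/\delta$ after the coupling inflates the energy by a factor $1/\delta$) lets one replace the unbounded $\log\dim$ factor of the finite-dimensional Alicki-Fannes bound by the finite quantity $S(\gamma_B(\widetilde E/\delta))$, this being the maximal entropy compatible with the energy bound under the Gibbs hypothesis. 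Tracking the constants through the two entropy differences and the two $g$-terms yields the coefficient $6\epsilon'$ in front of $S(\gamma_B(\widetilde E/\delta))$ and the additive $3g(\epsilon')$.

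**For the trivial bound** I would argue directly: by subadditivity and the energy-constrained maximum-entropy property, $S(B|C)_{\omega_i}=S(BC)_{\omega_i}-S(C)_{\omega_i}$ is bounded in absolute value using $|S(B|C)_{\omega_i}|\leq S(B)_{\omega_i}\leq S(\gamma_B(\widetilde E))$, since the marginal $\omega_i^B$ has energy at most $\widetilde E$ and the Gibbs state maximises entropy subject to that constraint. Applying this to each of $\omega_1,\omega_2$ and using the triangle inequality gives the factor $2$.

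**The hard part will be** the bookkeeping in the energy rescaling within the coupling argument: ensuring that after forming the common extension whose off-diagonal weight is governed by $\epsilon'$, the resulting effective state on $B$ still obeys an energy bound — here $\widetilde E/\delta$ rather than $\widetilde E$ — and that the Gibbs hypothesis guarantees $S(\gamma_B(\widetilde E/\delta))<\infty$ so the bound is nontrivial. In finite dimension this step is the replacement of $\log|B|$ by the entropy of a maximiser, and the subtlety in the infinite-dimensional setting is exactly that the energy available to the auxiliary state is inflated by $1/\delta$, which is why the argument of $\gamma_B$ is divided by $\delta$. Verifying that the \cite{Winter-S} machinery applies verbatim with $H_B$ (discrete, Gibbs) in place of a finite-dimensional system, and that all entropies involved are finite, is the one place where care is genuinely required; everything else is the standard telescoping of $S(BC)$ and $S(C)$ against the trace-distance and energy constraints.
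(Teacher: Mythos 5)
Your proposal matches the paper's proof essentially step for step: the energy-constrained diamond norm hypothesis gives $\frac12\|\omega_1-\omega_2\|_1\leq\epsilon$ directly since $\rho$ is an admissible test state, the main bound is then exactly the energy-constrained Alicki--Fannes continuity bound (Meta-Lemma 17 of \cite{Winter-S}) with the constants simplified via $\delta\leq\frac{\epsilon'}{1+\epsilon'}\leq\min\{\frac12,\epsilon'\}$ and $(1+x)h\!\left(\frac{x}{1+x}\right)=g(x)$ to give $6\epsilon'$ and $3g(\epsilon')$, and the trivial bound follows from $|S(B|C)|\leq S(B)\leq S(\gamma_B(\widetilde{E}))$ plus the triangle inequality. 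The only difference is that you sketch the internal coupling mechanism of the cited meta-lemma rather than quoting its conclusion, but the logical structure and all key inputs are identical to the paper's argument.
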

\begin{proof}
By definition, we get $\frac12 \| \omega_1 -\omega_2 \|_1 \leq \epsilon$,
hence we can apply \cite[Meta-Lemma 17]{Winter-S}, with $\epsilon'$ and
$\delta$ as specified:
\[
  \bigl| S(B|C)_{\omega_1} - S(B|C)_{\omega_2} \bigr| 
        \leq (2\epsilon'+4\delta) S\bigl( \gamma_B(\widetilde{E}/\delta) \bigr) 
              + 2h(\delta) + (1+\epsilon')h\!\left(\frac{\epsilon'}{1+\epsilon'}\right),
\]
with the binary entropy $h(x) = -x\log x - (1-x)\log(1-x)$.
To get our claimed form, we use the upper bounds
$\delta \leq \frac{\epsilon'}{1+\epsilon'} \leq \min\left\{\frac12, \epsilon'\right\}$,
together with the monotonicity of the binary entropy on the interval
$[0,\frac12]$, and the elementary observation that
$(1+x)h\!\left(\frac{x}{1+x}\right) = g(x)$.

The ``trivial'' bound follows from the energy bound on the states
$\omega_i$ and the fact that under that energy bound, the conditional
entropy is between $-S\bigl( \gamma_B(\widetilde{E}) \bigr)$ and 
$+S\bigl( \gamma_B(\widetilde{E}) \bigr)$.
\end{proof}

\medskip
For the rest of the discussion we shall assume that the channels map 
energy-bounded states to energy-bounded states, more precisely
\[
  f_i(E) := \sup \left\{ \tr\cN_i(\rho) H_B \,:\, \tr\rho H_A\leq E \right\} < \infty
\]
for all $E \geq 0$. These functions are evidently non-decreasing and 
(easy to see) concave. Thus, we will actually use an affine linear upper bound
$f_i(E) \leq \alpha E + E_0$ ($i=1,2$), which can be expressed 
concisely as the operator inequality
\[
  \cN_i^*(H_B) \leq \alpha H_A + E_0.
\]
We call channels obeying such a constraint \emph{energy-limited}.

Gaussian channels that satisfy this condition include all
channels with passive linear unitary dilation and energy-bounded environment, 
all displacements, finite squeezers, and compositions of such channels.

\medskip
We have now the ingredients to imitate, with a small variation, the
telescoping trick from \cite{LeungSmith:cont} to prove continuity bounds 
for $n$-letter conditional entropies of states originating from different
tensor power channels. 

On the composite system, we consider by default the sum Hamiltonian 
$H = \sum_{i=1}^n H_{A_i} \ox \1^{\ox [n]\setminus i}$, whose energy is the
sum of all local energy contributions:
$\tr \rho H = \sum_{i=1}^n \tr \rho^{A_i} H_{A_i}$. 
Bounding the average input energy for a channel is customarily 
done with respect to this Hamiltonian.

\begin{lemma}
  \label{lemma:conditional-telescope}
  Let $\cN_1$ and $\cN_2$ be energy-limited quantum channels from $A$ to $B$,
  with $\frac12 \| \cN_1-\cN_2 \|_{\diamond E} \leq \epsilon < 1$.
  Then, for any state $\rho$ on $A^nC$ with 
  $\tr\rho^{A^n} \overline{H}_{A^n} = \frac1n \sum_{i=1}^n \tr \rho^{A_i} H_{A_i} \leq E$,
  and 
  \[
     \omega_n := (\cN_1^{\ox n} \ox \id_C)\rho, \quad
     \omega_0 := (\cN_2^{\ox n} \ox \id_C)\rho,
  \]
  it holds
  \begin{equation}\begin{split}
    \frac1n \bigl| S(B^n|C)_{\omega_n} - S(B^n|C)_{\omega_0} \bigr| 
         &\leq 14\delta(1+\delta) \, S\Bigl(\gamma_B\bigl( (1+3\delta)\widetilde{E}/\delta \bigr) \Bigr)
                                                                + 3 g\bigl( 2\delta(1+\delta) \bigr) \\
         &\leq 28\delta \, S\bigl( \gamma_B(4\widetilde{E}/\delta) \bigr) + 3 g(4\delta),
  \end{split}\end{equation}
  where $\delta = \sqrt{\epsilon}$ and $\widetilde{E} = \alpha E + E_0$.
\end{lemma}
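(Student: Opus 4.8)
The plan is to run the Leung--Smith telescoping argument, reducing the $n$-letter statement to $n$ applications of the single-letter Lemma~\ref{lemma:conditional-entropy}. For $0\le j\le n$ set
\[
  \omega_j := \bigl(\cN_1^{\ox j}\ox\cN_2^{\ox(n-j)}\ox\id_C\bigr)\rho,
\]
so that the extreme states are exactly the $\omega_n$ and $\omega_0$ of the statement, and write $S(B^n|C)_{\omega_n}-S(B^n|C)_{\omega_0}=\sum_{j=1}^n\bigl(S(B^n|C)_{\omega_j}-S(B^n|C)_{\omega_{j-1}}\bigr)$. Consecutive states $\omega_j,\omega_{j-1}$ differ only in the channel on the $j$-th factor ($\cN_1$ versus $\cN_2$), both arising from the common state $\sigma_j:=(\cN_1^{\ox(j-1)}\ox\id_{A_j}\ox\cN_2^{\ox(n-j)}\ox\id_C)\rho$. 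First I would apply the chain rule $S(B^n|C)=S(B_j\mid B_{[n]\setminus j}C)+S(B_{[n]\setminus j}\mid C)$ and note that, since $\cN_1,\cN_2$ are trace preserving, tracing out $B_j$ makes the marginals of $\omega_j$ and $\omega_{j-1}$ on $B_{[n]\setminus j}C$ coincide; the second chain-rule term therefore cancels in each summand, leaving
\[
  S(B^n|C)_{\omega_j}-S(B^n|C)_{\omega_{j-1}}=S(B_j\mid C_j)_{\omega_j}-S(B_j\mid C_j)_{\omega_{j-1}},\qquad C_j:=B_{[n]\setminus j}C,
\]
which is precisely the single-letter quantity of Lemma~\ref{lemma:conditional-entropy} for $\cN_1,\cN_2$ acting on input $A_j$ with reference $C_j$.

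The hard part, and the crux of the whole proof, is that only the \emph{average} input energy is controlled, $\tfrac1n\sum_j E_j\le E$ with $E_j:=\tr\sigma_j^{A_j}H_{A_j}$, while individual $E_j$ may be large; both the passage from $\|\cN_1-\cN_2\|_{\diamond E_j}$ to a trace-distance bound and the output-entropy bound degrade as $E_j$ grows. I would resolve this with a threshold split. Put $\delta=\sqrt\epsilon$ and call an index $j$ \emph{good} if $E_j\le E/\delta$ and \emph{bad} otherwise; by Markov's inequality the bad fraction satisfies $b:=\tfrac1n\#\{\text{bad }j\}\le\delta$. The reason for the choice $\delta=\sqrt\epsilon$ is that, for a good index, the concavity of $E'\mapsto\|\cN_1-\cN_2\|_{\diamond E'}$ from Lemma~\ref{lemma:diamond-E-norm}(2) gives $\tfrac12\|\cN_1-\cN_2\|_{\diamond E_j}\le\max\{1,E_j/E\}\,\epsilon\le\epsilon/\delta=\delta$, hence $\tfrac12\|\omega_j-\omega_{j-1}\|_1\le\delta$ uniformly over good indices; the energy-limited hypothesis $\cN_i^*(H_B)\le\alpha H_A+E_0$ simultaneously yields the output bounds $\tr\omega_j^{B_j}H_B,\ \tr\omega_{j-1}^{B_j}H_B\le\widetilde E_j:=\alpha E_j+E_0$.

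For each good index I would then invoke Lemma~\ref{lemma:conditional-entropy} with trace-distance parameter $\delta$ and free parameter $\epsilon'=2\delta(1+\delta)$; a short computation shows the induced internal parameter $\delta_{\mathrm{sl}}=\frac{\epsilon'-\delta}{1+\epsilon'}$ obeys $1/\delta_{\mathrm{sl}}\le(1+3\delta)/\delta$, so the per-term bound is $6\epsilon'\,S\bigl(\gamma_B((1+3\delta)\widetilde E_j/\delta)\bigr)+3g(\epsilon')$. For each bad index I would instead use the ``trivial'' bound $2S(\gamma_B(\widetilde E_j))$ of the same lemma, which needs no trace-distance control. Summing, dividing by $n$, and using concavity of $x\mapsto S(\gamma_B(x))$ together with $S(\gamma_B(0))\ge0$ (Jensen, placing the missing mass at $x=0$) collapses the good part to $12\delta(1+\delta)\,S(\gamma_B((1+3\delta)\widetilde E/\delta))+3g(2\delta(1+\delta))$, since $\tfrac1n\sum_{\text{good}}\widetilde E_j\le\widetilde E=\alpha E+E_0$. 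For the bad part the same Jensen step factors out the bad fraction as $b\,S(\gamma_B(\widetilde E/b))$, and since $b\mapsto b\,S(\gamma_B(\widetilde E/b))$ is non-decreasing (again by concavity with $S(\gamma_B(0))\ge0$) and $b\le\delta$, the bad terms are $\le 2\delta\,S(\gamma_B(\widetilde E/\delta))\le 2\delta(1+\delta)\,S(\gamma_B((1+3\delta)\widetilde E/\delta))$.

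Adding the good and bad contributions yields the first displayed bound, and the second line follows from the crude estimates $1+\delta\le2$, $1+3\delta\le4$, $2\delta(1+\delta)\le4\delta$ (valid for $\delta\le1$) together with the monotonicity of $g$ and of $S(\gamma_B(\cdot))$. The only remaining bookkeeping point is to check that $\epsilon'=2\delta(1+\delta)\le1$ so that Lemma~\ref{lemma:conditional-entropy} is applicable to the good indices; this holds in the small-$\epsilon$ regime that is relevant for the continuity applications, and outside it the trivial bound $2S(\gamma_B(\widetilde E))$ already dominates the claimed right-hand side.
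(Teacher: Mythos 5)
Your proof is correct, and structurally it is the paper's own argument: the same telescoping interpolation, the same chain-rule reduction of each summand to the single-letter Lemma~\ref{lemma:conditional-entropy}, a Markov-inequality split of the indices into a good and a bad set, the trivial bound on the bad set, and the same concavity manipulations of $S(\gamma_B(\cdot))$ to recombine everything. The one genuine difference is the splitting criterion. The paper sets $\epsilon_i:=\frac12\|\cN_1-\cN_2\|_{\diamond E_i}$, uses concavity and monotonicity of the norm in $E$ (Lemma~\ref{lemma:diamond-E-norm}) to get $\frac1n\sum_i\epsilon_i\le\epsilon$, and applies Markov to the $\epsilon_i$ with threshold $t=\sqrt{\epsilon}(1+\sqrt{\epsilon})$; you instead apply Markov directly to the input energies $E_j$ with threshold $E/\delta$, and then convert the energy bound into the uniform distance bound $\frac12\|\cN_1-\cN_2\|_{\diamond E_j}\le\max\{1,E_j/E\}\,\epsilon\le\delta$ on the good set via the rescaling inequality of Lemma~\ref{lemma:diamond-E-norm}(2). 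Both routes lean on the same property of the norm and land on identical constants, so this is a variation rather than a different method; if anything yours is marginally cleaner, because the thresholded quantity (energy) is the same one that controls the output-entropy scale $\widetilde{E}_j$. One point where you are actually more careful than the paper: you notice that Lemma~\ref{lemma:conditional-entropy} requires $\epsilon'\le1$, so the choice $\epsilon'=2\delta(1+\delta)$ is only admissible for $\delta\le(\sqrt{3}-1)/2$, and you dispose of the complementary regime by observing that the averaged trivial bound $2S\bigl(\gamma_B(\widetilde{E})\bigr)$ already lies below the claimed right-hand side there, since $14\delta(1+\delta)\ge 7$ and $S\bigl(\gamma_B((1+3\delta)\widetilde{E}/\delta)\bigr)\ge S\bigl(\gamma_B(\widetilde{E})\bigr)$. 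The paper makes the analogous choice $\epsilon_i'=2t$ without checking $2t\le1$, so your remark closes a small gap rather than opening one.
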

\begin{proof}
We introduce the following ``interpolations'' 
\[
  \omega_i := (\cN_1^{\ox i} \ox \cN_2^{\ox n-i} \ox \id_C)\rho,\ i=0,\ldots,n,
\] 
between $\omega_n$ and $\omega_0$, so that
\[\begin{split}
  S(B^n|C)_{\omega_n} - S(B^n|C)_{\omega_0}
         &= \sum_{i=1}^n S(B^n|C)_{\omega_i} - S(B^n|C)_{\omega_{i-1}} \\
         &= \sum_{i=1}^n S(B_i|CB^{[n]\setminus i})_{\omega_i} - S(B_i|CB^{[n]\setminus i})_{\omega_{i-1}}.
\end{split}\]
The second equality follows from the fact that $\omega_i$ and $\omega_{i-1}$
differ only on $B_i$ (where different channels have been applied), but
have the same reduced state on $C \ox B^{[n]\setminus i}$, hence
\[
  S(\omega_i^C) = S(\omega_{i-1}^C),\quad
  S(\omega_i^{CB^{[n]\setminus i}}) = S(\omega_{i-1}^{CB^{[n]\setminus i}}).
\]
Thus, via the triangle inequality
\[
  \frac1n \bigl| S(B^n|C)_{\omega_n} - S(B^n|C)_{\omega_0} \bigr| 
      \leq \frac1n \sum_{i=1}^n \bigl| S(B_i|CB^{[n]\setminus i})_{\omega_i} 
                                        - S(B_i|CB^{[n]\setminus i})_{\omega_{i-1}} \bigr|.
\]

To bound the individual terms, introduce the local energy contributions
$E_i := \tr \rho^{A_i} H_{A_i}$
and set $\epsilon_i := \frac12 \| \cN_1-\cN_2 \|_{\diamond E_i}$.
Observe that by assumption, $\frac1n \sum_i E_i \leq E$, and so by 
Lemma~\ref{lemma:diamond-E-norm} (Property 3), we have
$\frac1n \sum_i \epsilon_i \leq \epsilon$. 
Furthermore, $\tr \omega_i^{B_i} H_{B_i} \leq \widetilde{E}_i := \alpha E_i + E_0$.
At the same time, $\frac12 \| \omega_i - \omega_{i-1} \|_1 \leq \epsilon_i$,
so we can apply \cite[Meta-Lemma 17]{Winter-S} in the above simplified
form of Lemma~\ref{lemma:conditional-entropy}:
\[
  \bigl| S(B_i|CB^{[n]\setminus i})_{\omega_i} - S(B_i|CB^{[n]\setminus i})_{\omega_{i-1}} \bigr|
               \leq 6\epsilon_i' S\bigl( \gamma_B(\widetilde{E}_i/\delta_i) \bigr) + 3 g(\epsilon_i'),
\]
with $\epsilon_i < \epsilon_i' \leq 1$ and $\delta_i = \frac{\epsilon_i'-\epsilon_i}{1_+\epsilon_i'}$.
We observed that this bound is very bad for ``large'' $\epsilon_i$, because for those
we are forced to have an even larger $\epsilon_i'$ but a relatively ``small''
$\delta_i$. 
For those indices $i$, we use the trivial bound from Lemma~\ref{lemma:conditional-entropy}
as the better upper bound.

We are thus motivated to partition the index set $[n] = S \stackrel{.}{\cup} L$
into the $i$ with small and large value of $\epsilon_i$, respectively. 
Concretely, with a parameter $t$ to be fixed later, let
\[
  S := \{ i : \epsilon_i \leq t \}, \quad
  L := \{ i : \epsilon_i > t \}.
\]
Note that the set $L$ cannot be too large; indeed, by Markov's inequality,
$\lambda := \frac1n |L| \leq \epsilon/t$.
On the other hand, for $i\in S$, let $\epsilon_i'=2t$
and $\delta_i = \frac{2t-\epsilon_i}{1+2t} \geq \frac{t}{1+2t}$,
so we get
\begin{equation}
  \bigl| S(B_i|CB^{[n]\setminus i})_{\omega_i} - S(B_i|CB^{[n]\setminus i})_{\omega_{i-1}} \bigr|
               \leq 12t\, S\bigl( \gamma_B(\widetilde{E}_i/\delta_i) \bigr) + 3 g(2t),
\end{equation}
For $i\in L$, we will use the trivial bound $2 S\bigl( \gamma_B(\widetilde{E}_i) \bigr)$.

To put together the final estimate, we need to keep track of how much
energy is in the $S$- and $L$-systems, respectively:
\[
  |L|\widetilde{E}_> := \sum_{i\in L} \widetilde{E}_i, \quad
  |S|\widetilde{E}_< := \sum_{i\in S} \widetilde{E}_i.
\]
Observe that $\lambda \widetilde{E}_> + (1-\lambda)\widetilde{E}_< 
= \frac1n \sum_{i=1}^n \widetilde{E}_i \leq \widetilde{E}$, which we shall use
in a minute.
Now,
\[\begin{split}
  \frac1n \bigl| S(B^n|C)_{\omega_n} - S(B^n|C)_{\omega_0} \bigr| 
      &\leq \frac1n \sum_{i\in L} 2 S\bigl( \gamma_B(\widetilde{E}_i) \bigr)
             + \frac1n \sum_{i\in S} \left[ 12t\, S\bigl( \gamma_B((1+2t)\widetilde{E}_i/t) \bigr)
                                                                                 + 3 g(2t)\right] \\
      &\leq 2\lambda\, S\bigl( \gamma_B(\widetilde{E}_>) \bigr)
             + 12t\, S\bigl( \gamma_B((1+2t)\widetilde{E}_</t) \bigr) + 3 g(2t)                     \\
      &\leq 2\lambda\, S\bigl( \gamma_B(\widetilde{E}/\lambda) \bigr) 
             + 12t\, S\!\left( \gamma_B\!\left(\frac{1+2t}{1-\lambda} \widetilde{E}/t\right) \right) 
                                                                                            + 3 g(2t) \\
      &\leq 2\frac{\epsilon}{t}\, S\bigl( \gamma_B(t\widetilde{E}/\epsilon) \bigr) 
             + 12t\, S\!\left( \gamma_B\!\left(\frac{1+2t}{1-\epsilon/t} \widetilde{E}/t\right) \right)
                                                                                            + 3 g(2t),
\end{split}\]
where we have used, in order of appearance: the concavity of the Gibbs state
entropy as a function of the energy, the fact that both $\lambda\widetilde{E}_>$ and
$(1-\lambda)\widetilde{E}_<$ are bounded by $\widetilde{E}$, and
that on the interval $(0,x_0]$, 
$\xi S\bigl( \gamma(F/\xi) \bigr)$ is maximised at $\xi = x_0$ \cite[Cor.~12]{Winter-S}.
Finally, choosing $t=\sqrt{\epsilon}(1+\sqrt{\epsilon})$ concludes the proof.
\end{proof}

\medskip
With this we can prove asymptotic continuity for infinite dimensional channel
capacities, including quantum capacity ($Q$) and classical capacity ($C$),
if the channels are energy-limited and the capacity has a mean-energy 
constraint at the input,
essentially along the lines of the proofs by Leung-Smith \cite{LeungSmith:cont}.
Assume two channels $\cN_1$ and $\cN_2$, both energy-limited as in
Lemma~\ref{lemma:conditional-telescope}.

For instance, the classical capacity with input energy bound $E$ is given by
\begin{align*}
  C(\cN,E)    &= \sup_n \frac1n \chi(\cN^{\ox n},nE), \text{ where}\\
  \chi(\cN,E) &= \sup_{\{p_x,\ket{\psi_x}\in A\}} S(B)_\rho - S(B|X)_\rho
                 \ \text{ s.t. }\ \tr\omega^B H = \sum_x p_x \tr\psi_x H \leq E,
\end{align*}
with respect to the state 
$\rho^{XB} = \sum_x p_x \proj{x}^X \ox \cN(\psi_x)^B 
           = \cN\Bigl( \sum_x p_x \proj{x}^X\ox\psi_x^A \Bigr)$.
Now, the $n$-copy entropies involved are $S(B^n)$ and $S(B^n|X)$, just
as in Lemma~\ref{lemma:conditional-telescope}. We get that 
\[
  \frac1n \bigl| \chi(\cN_1^{\ox n},nE) - \chi(\cN_2^{\ox n},nE) \bigr| 
       \leq 56\delta \, S\bigl( \gamma_B(4\widetilde{E}/\delta) \bigr) + 6 g(4\delta),
\]
with $\delta = \sqrt{\epsilon}$ and $\widetilde{E} = \alpha E + E_0$.
As the right hand side does not depend on $n$, the same bound holds
for $| C(\cN_1,E)-C(\cN_2,E) |$.

Similarly, the quantum capacity with input energy bound $E$ equals
\begin{align*}
  Q(\cN,E)       &= \sup_n \frac1n Q^{(1)}(\cN^{\ox n},nE), \text{ where}\\
  Q^{(1)}(\cN,E) &= \sup_{\ket{\varphi}\in A\ox A'} -S(A|B)_{(\id\ox\Delta)\varphi}
                    \ \text{ s.t. }\ \tr\varphi^{A'} H_{A'} \leq E.
\end{align*}
Here, the telescoping trick is applied to the $n$-copy entropies
$S(B^n)$ and $S(B^n|A)$, because $-S(A|B^n) = -S(A)+S(B^n)-S(B^n|A)$;
once more, using Lemma~\ref{lemma:conditional-telescope} we get
\[
  \frac1n \bigl| Q^{(1)}(\cN_1^{\ox n},nE) - Q^{(1)}(\cN_2^{\ox n},nE) \bigr|
     \leq 56\delta \, S\bigl( \gamma_B(4\widetilde{E}/\delta) \bigr) + 6 g(4\delta).
\]
As the right hand side does not depend on $n$, the same bound holds
for $| Q(\cN_1,E)-Q(\cN_2,E) |$.

We record these results as a quotable theorem. Similar statements can 
be proved for the entanglement-assisted capacity, the private capacity, 
or even the individual capacity; we omit the details here.

\begin{theorem}
  \label{thm:asy-cont}
  Let $\cN_1$ and $\cN_2$ be energy-limited quantum channels from $A$ to $B$,
  equipped with grounded Hamiltonians satisfying the Gibbs hypothesis.
  If $\frac12 \| \cN_1-\cN_2 \|_{\diamond E} \leq \epsilon < 1$, then
  \begin{align*}
    |C(\cN_1,E)-C(\cN_2,E)| &\leq 56\delta\,S\bigl( \gamma_B(4\widetilde{E}/\delta) \bigr) + 6g(4\delta),\\
    |Q(\cN_1,E)-Q(\cN_2,E)| &\leq 56\delta\,S\bigl( \gamma_B(4\widetilde{E}/\delta) \bigr) + 6g(4\delta),
  \end{align*}
  with $\delta=\sqrt{\epsilon}$ and $\widetilde{E} = \alpha E + E_0$.
  \qed
\end{theorem}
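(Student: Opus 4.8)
The plan is to reduce Theorem~\ref{thm:asy-cont} almost entirely to
Lemma~\ref{lemma:conditional-telescope}, since the hard analytic work --- the
telescoping estimate together with the energy bookkeeping via the Gibbs-state
entropy --- has already been carried out there. What remains is to express the
relevant capacities as regularised suprema of $n$-letter entropic quantities
to which that lemma directly applies. First I would recall the single-letter
formulas and their regularisations. For the classical capacity one writes
$C(\cN,E) = \sup_n \frac1n \chi(\cN^{\ox n},nE)$ with $\chi$ the
energy-constrained Holevo quantity, and for the quantum capacity
$Q(\cN,E) = \sup_n \frac1n Q^{(1)}(\cN^{\ox n},nE)$ with $Q^{(1)}$ the
coherent information maximised over energy-bounded inputs. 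The essential
observation is that in both cases the $n$-letter object is a difference of two
entropies built from the channel outputs, namely $S(B^n)$ minus a conditional
entropy $S(B^n|X)$ or $S(B^n|A)$, so that the whole expression is, up to terms
that cancel between $\cN_1$ and $\cN_2$, precisely the kind of $n$-copy
conditional entropy controlled by Lemma~\ref{lemma:conditional-telescope}.

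Next I would carry out the comparison at the level of the optimisers. For a
fixed feasible input ensemble (respectively pure state) achieving energy at most
$nE$, I apply Lemma~\ref{lemma:conditional-telescope} to bound
$\frac1n|S(B^n|C)_{\omega_n} - S(B^n|C)_{\omega_0}|$ by
$28\delta\,S\bigl(\gamma_B(4\widetilde E/\delta)\bigr) + 3g(4\delta)$, with
$\delta=\sqrt{\epsilon}$ and $\widetilde E=\alpha E+E_0$, where $\omega_n$ and
$\omega_0$ are the output states under $\cN_1^{\ox n}$ and $\cN_2^{\ox n}$. For
the Holevo quantity $\chi$ one has a difference of \emph{two} such conditional
entropies, $S(B^n)-S(B^n|X)$, so applying the lemma to each piece and adding the
two contributions doubles both the leading coefficient and the $g$-term, giving
the factors $56\delta$ and $6g(4\delta)$ that appear in the statement. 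The same
structural decomposition $-S(A|B^n)=-S(A)+S(B^n)-S(B^n|A)$ for the coherent
information shows that $Q^{(1)}$ is likewise a difference of two entropies of the
telescoping type, yielding the identical bound. Crucially, because the
right-hand side is independent of $n$, taking the supremum over feasible inputs
and then over $n$ (i.e.\ passing to the regularisation) preserves the inequality,
so the bound on the single-letter quantities transfers verbatim to
$|C(\cN_1,E)-C(\cN_2,E)|$ and $|Q(\cN_1,E)-Q(\cN_2,E)|$.

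The one point requiring genuine care --- and the step I expect to be the main
obstacle --- is the transfer from a bound on the \emph{difference of optima} to a
bound on a single optimisation. The clean way to argue is: take a
near-optimiser $\rho^\star$ for $\chi(\cN_1^{\ox n},nE)$; it is a feasible input
for $\chi(\cN_2^{\ox n},nE)$ as well, since the energy constraint
$\tr\omega^B H\le nE$ depends only on the input and not on which channel is
applied; hence the lemma bounds the resulting change in the objective, giving
$\chi(\cN_1^{\ox n},nE)\le\chi(\cN_2^{\ox n},nE)+n\bigl(56\delta\,
S(\gamma_B(4\widetilde E/\delta))+6g(4\delta)\bigr)$, and symmetrically. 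A
subtlety here is that the energy constraint used in
Lemma~\ref{lemma:conditional-telescope} is the \emph{mean} input energy
$\frac1n\sum_i\tr\rho^{A_i}H_{A_i}\le E$, which matches the $nE$ total budget in
the capacity formula; one must verify that the same state $\rho^\star$ meets
this averaged constraint, which it does by construction. With feasibility and
the $n$-independence of the estimate in hand, the theorem follows by dividing by
$n$, taking suprema, and letting $n\to\infty$; the remaining manipulations are
routine and I would simply cite the single-letter computations already displayed
above the theorem statement.
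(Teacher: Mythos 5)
Your proposal is correct and follows essentially the same route as the paper: decompose $\chi$ and $Q^{(1)}$ into differences of $n$-copy (conditional) entropies, apply Lemma~\ref{lemma:conditional-telescope} to each of the two entropy terms (whence the doubled constants $56\delta$ and $6g(4\delta)$), and use the $n$-independence of the bound to pass to the regularised capacities. Your explicit treatment of the near-optimiser/feasibility step (the energy constraint depends only on the input, not on the channel) is merely a more careful spelling-out of what the paper leaves implicit.
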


\medskip
Note that for $C(\cN,E)$, Shirokov~\cite{Shirokov:E-diamond} has already
proved a uniform continuity bound that is in fact tighter than the above.
The reason that we have nevertheless presented our analysis is that it
yields a flexible tool of general applicability, whereas the reasoning 
in \cite[Props.~6 \&{} 7]{Shirokov:E-diamond} seems to be geared towards 
the classical capacity.

\section{Discussion}
\label{sec:discussion}
The energy-constrained diamond norm is the solution to a distinguishability
problem with constraints, and has found already applications in the theory
of quantum channel capacities and the characterisation of convergence of
channels, and even in quantitative derivations of ``quantum Darwinism''
in infinite dimension \cite{CV-darwinism}.

In the present paper, we have collected some new and some of the previously 
noted properties of this norm, the most important a general tool to prove
uniform continuity of infinite dimensional channels subject to an average
energy constraint at the sender.
As an amusing application to one-parameter unitary groups, we re-derived 
the Margolus-Levitin quantum speed limit in the form of a uniform continuity
of the time evolution with respect to the energy-constrained diamond norm.
Similar statements can be expected for certain concrete one-parameter
semigroups, for which the outstanding problem is the characterisation of
broad classes of generators leading to norm continuous time evolution.
Two important questions in this context are: 
For a given strongly continuous one-parameter group $e^{-itH}$, which are the
Hamiltonians making the group norm continuous w.r.t.~$\|\cdot\|_{\diamond E}$?
For a given one-parameter semigroup, which ones are the Hamiltonians
making the semigroup norm continuous w.r.t.~$\|\cdot\|_{\diamond E}$?

One motivation of the present work, not yet discussed here, is the extension
to continuous variables of the theory of approximate degradable channels
and the (private and quantum) capacity bounds derived in \cite{approx-degrad}.
The results presented here, as well as from \cite{Shirokov:E-diamond}, 
suggest very strongly that using a definition of approximate degradability 
in terms of energy-constrained diamond norms should lead to bounds similar
to \cite{approx-degrad}: For a quantum channel $\cN$ from $A$ to $B$, with
a grounded Hamiltonian $H_A$, and complementary channel $\cN^c$ from $A$ to $E$,
we say that $\cN$ is \emph{$\epsilon$-degradable} if
$\inf_{\cD} \frac12 \| \cN^c - \cD\circ\cN \|_{\diamond E} \leq \epsilon$,
where the minimisation is over all degrading cptp maps $\cD$ from $B$ to $E$.
We leave the investigation of this notion and its application to future work
(see however \cite{SWAT}).

From a foundational point of view on hypothesis testing under constraints \cite{MWW}, 
it would be desirable to unify the constraints on the input energy to the channels 
with the energy constraint at the output measurement as developed in 
\cite{NavascuesPopescu:vision}.

\bigskip
\textbf{Acknowledgments.}
I thank Maksim Shirokov, Krishnakumar Sabapathy, Stefano Mancini,
Stephan Weis, Stefano Pirandola, Gerardo Adesso and Mark Wilde, 
among several other people, for interesting and stimulating 
discussions on diamond norms, energy constraints and applications.
Financial support by the ERC Advanced Grant ``IRQUAT'', 
the Spanish MINECO (grants FIS2013-40627-P and FIS2016-86681-P) 
with the support of FEDER funds,
and the Generalitat de Catalunya CIRIT, project 2014-SGR-966
is acknowledged.

\xpatchcmd\bibsection{19}{2}{}{}
\xpatchcmd\bibsection{\begingroup}{\vskip-15pt\begingroup}{}{}

\end{document}